\documentclass[11pt]{article}
\usepackage{amsmath}
\usepackage{amssymb}
\usepackage{amsthm}
\usepackage{array}
\usepackage{appendix}
\usepackage[english]{babel}
\usepackage{bbm}
\usepackage{color}
\usepackage{comment}
\usepackage{geometry} 
\usepackage{graphicx}
\usepackage{hyperref}
\usepackage[utf8]{inputenc}
\usepackage[T1]{fontenc}
\usepackage{lmodern}
\usepackage{microtype}
\usepackage{booktabs}
\usepackage{longtable}
\usepackage{lscape}
\usepackage{natbib}
\bibliographystyle{abbrvnat}
\setcitestyle{authoryear,open={(},close={)}} 
\usepackage{pdflscape}
\usepackage{siunitx}
\usepackage{tikz}
\usepackage{latexsym}
\usepackage{dsfont}
\usepackage[font=small,labelfont=bf]{caption}
\usepackage{paralist}
\setlength{\bibsep}{0pt plus 0.3ex}

\tikzstyle{solid node}=[circle,draw,inner sep=2,fill=black] 
\tikzstyle{hollow node}=[circle,draw,inner sep=2.5] 
\tikzstyle{hollow node1}=[rectangle,draw,inner sep=2] 
\theoremstyle{plain}
\newtheorem{thm}{Theorem}[section]
\newtheorem{prop}[thm]{Proposition}
\newtheorem{assum}[thm]{Assumption}

\newtheorem{cor}[thm]{Corollary}
\newtheorem{definition}[thm]{Definition}

\theoremstyle{definition}
\newtheorem{example}[thm]{Example}

\theoremstyle{remark}
\newtheorem{rmk}[thm]{Remark}

\geometry{a4paper,total={6in, 9.5in}, top={25mm}}
\graphicspath{ {pictures/} }

\renewcommand{\P}{\operatorname{\mathbb{P}}}
\newcommand{\E}{\operatorname{\mathbb{E}}}
\newcommand{\G}{\mathcal{G}}

\newcommand{\cov}{\operatorname{cov}}
\newcommand{\var}{\operatorname{var}}
\newcommand{\pth}[2]{({#1} \rightsquigarrow {#2})}
\newcommand{\indep}{\perp\!\!\!\perp}

\newcommand{\reals}{\mathbbm{R}}
\newcommand{\ind}{\mathbbm{1}}

\newcommand{\diff}{\mathrm{d}}
\newcommand{\cd}{\mathrm{cd}}

\newcommand{\dto}{\stackrel{d}{\longrightarrow}}
\newcommand{\pnorm}{\operatorname{\Phi}}

\newcommand{\1}{\mathds{1}}

\providecommand{\keywords}[1]
{
	{\small	
	\textbf{\textit{Keywords ---}} #1}
}
\definecolor{gray25}{rgb}{0.25, 0.25, 0.25}
\definecolor{gray50}{rgb}{0.50, 0.50, 0.50}
\definecolor{navyblue}{rgb}{0.0, 0.0, 0.5}
\newcommand{\js}[1]{\textcolor{navyblue}{\sffamily\footnotesize [#1]}}
\newcommand{\sa}[1]{\textcolor{purple}{\sffamily\footnotesize [#1]}}

\definecolor{mygreen}{rgb}{0.10, 0.6,0 }
\definecolor{violet}{rgb}{0.6, 0, 0.2}

\definecolor{ggr}{HTML}{006633}
\definecolor{bbl}{HTML}{0066CC}
\definecolor{ppn}{HTML}{CC0099}
\definecolor{darkgreen}{rgb}{0, .5, 0}



\title{Extremes of Markov random fields on block graphs: max-stable limits and structured H\"{u}sler--Reiss distributions}
\author{Stefka Asenova\thanks{Corresponding author. UCLouvain, LIDAM/ISBA, Voie du Roman Pays 20, 1348 Louvain-la-Neuve, Belgium. E-mail: stefka.asenova@uclouvain.be} \and Johan Segers\thanks{UCLouvain, LIDAM/ISBA, Voie du Roman Pays 20, 1348 Louvain-la-Neuve, Belgium. E-mail: johan.segers@uclouvain.be}}
\date{\today}
\begin{document}

\maketitle

\begin{abstract}
	We study the joint occurrence of large values of a Markov random field or undirected graphical model associated to a block graph.
	On such graphs, containing trees as special cases, we aim to generalize recent results for extremes of Markov trees. 
	Every pair of nodes in a block graph is connected by a unique shortest path. These paths are shown to determine the limiting distribution of the properly rescaled random field given that a fixed variable exceeds a high threshold. 
	The latter limit relation implies that the random field is multivariate regularly varying and it determines the max-stable distribution to which component-wise maxima of independent random samples from the field are attracted.
	
	When the sub-vectors induced by the blocks have certain limits parametrized by Hüsler--Reiss distributions, the global Markov property of the original field induces a particular structure on the parameter matrix of the limiting max-stable Hüsler--Reiss distribution.
	The multivariate Pareto version of the latter turns out to be an extremal graphical model according to the original block graph.
	Thanks to these algebraic relations, the parameters are still identifiable even if some variables are latent.
\end{abstract}

\keywords{Markov random field; graphical model; block graph; multivariate extremes; tail dependence; latent variable; Hüsler--Reiss distribution; conditional independence}

\section{Introduction}







Graphical models are statistical models for random vectors whose components are associated to the nodes of a graph, and edges serve to encode conditional independence relations.
They bring structure to the web of dependence relations between the variables.
In the context of extreme value analysis, they permit for instance to model the joint behavior of the whole random vector given that a specific component exceeds a high threshold.
This could concern a system of intertwined financial risks, one of which is exposed to a large shock, or measurements of water heights along a river network, when a high water level is known to have occurred at a given location.
Each time, the question is how such an alarming event affects conditional probabilities of similarly high values occurring elsewhere.

A Markov random field is a random vector satisfying a set of conditional independence relations with respect to a non-directed graph.
For a max-stable random vector with continuous and positive density, \citet{strokorb2016conditional} showed that conditional independence implies unconditional independence. 
This implies that an absolutely continuous max-stable distribution can satisfy the Markov property with respect to a non-trivial graph only if variables on non-adjacent nodes are actually independent.
These models clearly differ from the max-linear graphical models in \citet{gissibl2018max} and \citet{amendola2022conditional} with respect to directed acyclic graphs which have max-stable, but singular, distributions. In our paper, conditional independence relations are induced by separation properties in undirected graphs, not by parent-child relations in directed ones.

Markov random fields with continuous and positive densities have made their way in extreme value analysis through the lens of multivariate Pareto distributions. 
Multivariate generalized Pareto distributions arise as weak limits of the normalized excesses over a threshold given the event that at least one variable exceeds a high threshold \citep{rootzen2006multivariate}. For a random vector $Y=(Y_1, \ldots, Y_d)$ with a multivariate Pareto distribution and a positive, continuous density, \citet{engelke2020graphical} study conditional independence for the vectors $Y^{(k)} = (Y \mid Y_k>1)$. They define $Y$ as an extremal graphical model with respect to the graph for which $Y^{(k)}$ is an ordinary graphical model. In our paper we provide an example of a probabilistic graphical model in the classical sense whose graph is shared with the corresponding extremal graphical model. The definition of extremal graphical model in \citet{engelke2020graphical} is made on the basis of the fact that $Y^{(k)}$ has support on a product space. More general results defining conditional independence on the punctured $d$-dimensional Euclidean space $\mathbb{R}^d\setminus  \{0\}$ is given in \citet{strokorb2022graphical}.

We focus on Markov random fields with respect to connected block graphs, which generalize trees, because one obtains a block graph if the edges of a tree are replaced by complete subgraphs.
Block graphs share some key properties with trees, such as unique shortest paths, acyclicality outside cliques and unique minimal separators. 
We study the limiting behavior of the normalized random field when a given variable exceeds a high threshold and we show that the limit depends on the unique shortest paths, a result familiar from \citet{segers2020one}. 
As a prime example, we consider the case where the random vectors induced by the graph's cliques have limits determined by Hüsler--Reiss distributions.

Our main result, Theorem~\ref{prop_main}, is inspired by the one about Markov random fields on connected trees, or Markov trees in short, in \citet{segers2020one}. Theorem~1 therein states that the limiting distribution of the scaled Markov tree given that a high threshold is exceeded at a particular node is a vector composed of products of independent multiplicative increments along the edges of the unique paths between the nodes. Here we show a similar result for block graphs rather than trees. This time, the products are with respect to the unique \emph{shortest} path between pairs of nodes. The increments over the edges are independent between blocks but possibly dependent within blocks. 
The product structure of the limiting field originates from the asymptotic theory for Markov chains with a high initial state, going back to \citet{smith1992the}, \citet{perfekt1994extremal}, and \citet{yun1998the}. It is confirmed by many subsequent studies on Markov chains such as \citet{segers2007multivariate}, \citet{janssen2014markov}, \citet{papastat2017extreme} and \citet{papastat2019hidden}. 

The assumptions that underlie Theorem~\ref{prop_main} are rather common in the literature. 
The first one 
says that the distribution of the random vector indexed by nodes within a block and scaled by the value of one of the variables, conditional on that variable taking a high value, converges to a non-degenerate distribution. 
The assumption is similar to the one in \citet{heffernan2004aconditional} to model tail probabilities in case of asymptotic independence. Our version of the assumption implies multivariate regular variation of the clique vectors \citep[Theorem~2]{segers2020one}, making our model suitable for asymptotically dependent random vectors. An in-depth analysis of more refined forms of regular variation designed for modelling joint tails in case of asymptotic independence is given for instance in \citet{resnick2005hidden,resnick2007limit}, see also \citet{resnick2002hidden} and \citet{campos2005extremal}. The topic of asymptotic independence is investigated in \citet{papastat2017extreme} and \citet{papastat2019hidden} for possibly higher-order Markov chains and in \citet{strokorb2020extremal} for extremal graphical models.

  The second assumption in our Theorem~\ref{prop_main} 
excludes processes which can become extreme again after reaching non-extreme levels. Earlier literature based on such regular behavior is \citet{smith1992the}, \citet{perfekt1994extremal}, \citet{segers2007multivariate}, \citet{janssen2014markov}, and \citet{resnick2013asymptotics}. A toy model violating this principle is given in \citet[Example~7.5]{segers2007multivariate}. Markov chains exhibiting transitions from non-extreme to extreme regions are studied extensively in \citet{papastat2017extreme} and \citet{papastat2019hidden}.

While the generalization of Markov trees to Markov fields on a larger class of graphs in Theorem~\ref{prop_main} is the first novelty of our paper, the second novelty concerns the domain of attraction of the Hüsler--Reiss distribution in Section~\ref{sec:hr}.
%
%
For the study of extremal graphical models, the Hüsler--Reiss distribution offers many advantages akin to those of the Gaussian distribution for ordinary graphical models \citep{engelke2020graphical}.
In Section~\ref{sec:hr}, we study the implications of our main result for a Markov random field with respect to a block graph which has \emph{clique-wise limits} based on Hüsler--Reiss distributions, or \emph{Markov block graph with Hüsler--Reiss limits} in short. 
In Proposition~\ref{prop:log-excess}, we show that for the said Markov block graphs with Hüsler--Reiss limits, the limiting distributions in Theorem~\ref{prop_main} are all multivariate log-normal.
In \citet{engelke2014estimation}, such log-normal limits were found to characterize the domain of attraction of the Hüsler--Reiss max-stable distribution.
Proposition~\ref{prop_HRattractor} states that the parameter matrix of the max-stable limit has an explicit and elegant path-sum structure, as was found for Markov trees in \citet{segers2020one} and \citet{asenova2021inference}. 
By Proposition~\ref{prop:extr_gm}, the associated multivariate Pareto distribution is an extremal graphical model with respect to the same graph, complementing Proposition~4 in \cite{engelke2020graphical}.
Finally, it is proved in Proposition~\ref{prop:identif_hr} that all parameters of the limiting structured Hüsler--Reiss max-stable distribution remain identifiable even when some of the variables are latent, as long as these variables lie on nodes belonging to least three different cliques.
This generalizes a similar identifiability claim for trees in \citet{asenova2021inference}.
Our contribution can be cast within the classical paradigm in extreme value analysis to model maxima and high threshold excesses by distribution families validated by asymptotic theory.
We find that probabilistic graphical models with respect to block graphs are in the domain of attraction of max-stable distributions that enjoy a particular structure induced by the graph and which for the Hüsler--Reiss family yield multivariate Pareto distributions that are extremal graphical models as in \citet{engelke2020graphical}.
Therefore, we believe that our results provide additional justification for the practical use of such models in situations where the data-generating mechanism can be described (approximately) as a graphical model on a block graph.
The point is important when choosing between models. Indeed, starting from component-wise maxima of Gaussian vectors with structured correlation matrices, \citet{lee2017multivariate} propose a different way of incorporating graphical or factor structures into Hüsler--Reiss max-stable distributions; see \citet*[Section~A.2]{asenova2021inference} for a discussion.

In our perspective the graph is known and we are interested in the tail limits of a random vector living on the nodes on that graph. This is suitable in applications such as extremes on river networks, or any application where we agree a priori on some graph. In \citet{engelke2020structure} the graph structure describing the dependence of extremes is the object of interest and the method proposed is nonparametric. A graph structure discovery based on Hüsler--Reiss Pareto models is proposed in \citet{engelke2020graphical}.

The outline of the paper is as follows. In the preliminary Section~\ref{sec:prelim} we introduce concepts and notation from graph theory, graphical models and extreme value analysis.
The main result about the convergence of the rescaled random field, conditional on the event that a given variable exceeds a high threshold, is stated in Section~\ref{sec:eot}.
Section~\ref{sec:hr} concerns a Markov block graph composed of clique-wise  distributions whose limits are determined by the Hüsler--Reiss family.
The conclusion in Section~\ref{sec:concl} summarizes the main points and sketches some directions for further research. Proofs are deferred to  Appendices~\ref{app:eot} and~\ref{app:hr}.

\section{Preliminaries}
\label{sec:prelim}

\subsection{Graph theory and Markov random fields}

A graph $\G = (V, E)$ is a pair consisting of a finite, non-empty vertex (node) set $V$ and edge set $E \subseteq \{(a,b)\in V\times V: a\neq b\}$. 
Often, we will write $e \in E$ for a generic edge. 
The graph $\G$ is said to be non-directed if for every pair of nodes $a, b$ we have $(a,b)\in E$ if and only if $(b, a)\in E$.
A path from node $a$ to node $b$ is an ordered sequence of vertices $(v_1, \ldots, v_n)$ with $v_1 = a$ and $v_n = b$ such that $(v_i, v_{i+1}) \in E$ for all $i = 1,\ldots,n-1$ and in which all nodes are distinct, except possibly for the first and last nodes.
A cycle is a path from a node to itself.
Two distinct nodes are connected if there exists a path from one node to the other.
A graph is connected if every pair of distinct nodes is connected.
In this paper, we only consider connected, undirected graphs.
 
An induced subgraph $\G_A=(A,E_A)$ is formed from the vertices in a subset $A$ of $V$ and all edges connecting them, $E_A=\{(a,b)\in E: a, b\in A\}$. A graph is complete if every pair of distinct nodes is an edge. A set of nodes $C \subseteq V$ is said to be a clique if the induced subgraph $\G_C=(C,E_C)$ is complete; the latter graph will be called a clique as well. A clique is maximal if it is not properly contained in larger one. Further on we use the word `clique' to mean maximal clique. 
The set of all (maximal) cliques of $\G$ will be denoted by $\mathcal{C}$.
 
A separator set $S \subseteq V$ between two other vertex subsets $A$ and $B$ is such that every path from a node in $A$ to a node in $B$ passes through at least one node in $S$.  A separator set $S$ is minimal when there is no proper subset of $S$ which is a separator of $A$ and $B$ too. 

In this paper we consider connected block graphs. A block is a maximal biconnected component, i.e., a subgraph that will remain connected after the removal of a single node. A block graph is a graph where every block is a clique; see Figure~\ref{fig:cg7} for an example.
If the edge between nodes $2$ and $6$ were removed, the subgraph induced by $\{2, 4, 5, 6\}$ would still be a block, i.e., biconnected, but it would no longer be a clique, and so the graph would no longer be a block graph.
Block graphs are considered natural generalizations of trees \citep{le2010the}. 

A path between two (distinct) nodes $a$ and $b$ is said to be shortest if no other path between $a$ and $b$ contains less nodes.
In block graphs, any two nodes $a$ and $b$ are connected by a unique shortest path \citep[Theorem~1]{behtoei2010a}, i.e., any other path connecting $a$ and $b$ contains strictly more nodes than the given path.
If the shortest path between $a$ and $b$ is the ordered node set $(v_1, \ldots, v_n)$, with $v_1 = a$ and $v_n = b$, then we define $\pth{a}{b}$ as the set of edges
\[
	\pth{a}{b}=\{(v_1, v_2), \ldots, (v_{n-1}, v_n)\}.
\]
Another important property of block graphs is that cycles can only occur within cliques, i.e., a path which is not contained in a single clique has two different endpoints. Moreover, in a block graph, a minimal separator between two cliques is always a single node. Two distinct cliques have at most one node in common. The set of minimal clique separators will be denoted by $\mathcal{S}$. In the block graph in Figure~\ref{fig:cg7}, the collection of minimal clique separators is $\mathcal{S} = \{ \{2\}, \{6\} \}$.

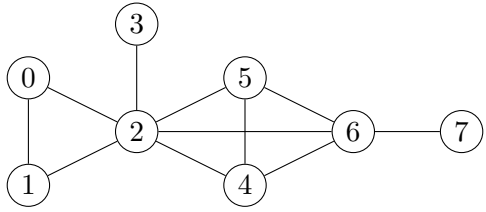
\begin{figure}
	\begin{center}
	\begin{tikzpicture}[scale=0.95] 
		\node[hollow node] (0) at (0,0) {0};
		\node[hollow node] (1) at (0,-1.5)  {1};
		\node[hollow node] (2) at (1.5,-0.75)  {2};
		\node[hollow node] (3) at (1.5,0.75)  {3};
		\node[hollow node] (6) at (4.5,-0.75)  {6};
		\node[hollow node] (4) at (3,-1.5)  {4};
		\node[hollow node] (5) at (3,0)  {5};
		\node[hollow node] (7) at (6,-0.75)  {7};
		\path (0) edge (1)  ;
		\path (1) edge (2);
		\path (2) edge (0)  ;
		\path (2) edge (4);
		\path (6) edge (2);
		\path (2) edge (5);
		\path (4) edge (5);
		\path (4) edge (6);
		\path (3) edge (2);
		\path (6) edge (7);
		\path (6) edge (5);
	\end{tikzpicture}
	\end{center}
	\caption{
		 An example of a block graph. There are four blocks or (maximal) cliques, $\mathcal{C} = \bigl\{ \{0, 1, 2\}, \{2, 3\}, \{2, 4, 5, 6\}, \{6, 7\} \bigr\}$, as well as two minimal separators, $\mathcal{S} = \bigl\{ \{2\}, \{6\} \bigr\}$. The unique shortest path from $7$ to $0$ has edge set $\pth{7}{0} = \{(7, 6), (6, 2), (2, 0)\}$.}
	\label{fig:cg7}
\end{figure}

Let $X = (X_v, v \in V)$ be a random vector which is indexed by the node set, $V$, of a graph $\G = (V, E)$. For non-empty $W \subseteq V$, write $X_W = (X_v, v \in W)$. We say that $X$ is a Markov random field with respect to $\G$ if it satisfies the global Markov property, that is, for all non-empty disjoint node sets $A, B, S \subset V$ we have the implication
\[
	\text{$S$ is a separator of $A$ and $B$ in $\G$} \implies
	X_A \indep X_B \mid X_S,
\]
where the right-hand side means that $X_A$ and $X_B$ are independent conditionally on $X_S$.
In other words, conditional independence relations within $X$ are implied by separation properties in $\G$. An extensive treatment of conditional independence, Markov properties and graphical models can be found in \citet{lauritzen1996graphical}.




Often we will use a double subscript to a random vector, e.g., $X_{u,A}$ for $A\subseteq V$, which, if not indicated otherwise, will mean that it is a vector indexed by the elements of $A$ and in some way related to a particular node $u$, not necessarily in $A$. For scalars, the expressions $x_{u,v}$ and $x_{uv}$ will signify the same thing and if $e=(u,v)$ is an edge they can also be written as $x_e$. In case of iterated subscripts, we will prefer the comma notation, $x_{u_1, u_2}$.

For two non-empty sets $A$ and $B$, let $B^A$ denote the set of functions $x : A \to B$. Formally, we think of $x$ as a vector indexed by $A$ and with elements in $B$, as reflected in the notation $x = (x_a, a \in A)$. We will apply this convention most often to subsets $A$ of the node set $V$ of a graph $\G$ and to subsets $B$ of the extended real line.


\subsection{Max-stable and multivariate Pareto distributions}
\label{sec:evt-intro}

Let $X = (X_v, v \in V)$ be a random vector indexed by a finite, non-empty set $V$ and with joint cumulative distribution function $F$, the margins of which are continuous, i.e., have no atoms. The interest in this paper is in tail dependence properties of $X$. It is convenient and does not entail a large loss of generality to  assume that the margins have been standardized to a common distribution, a convenient choice of which will be either the unit-Pareto distribution, $\P(X_v > x) = 1/x$ for $v \in V$ and $x \ge 1$, or the unit-Fréchet distribution, $\P(X_v \le x) = \exp(-1/x)$ for $v \in V$ and $x > 0$.
Assume that $F$ is in the max-domain attraction of a multivariate extreme-value distribution $G$, i.e., for either of the two choices of the common marginal distribution we have
\begin{equation}
\label{eq:FDG}
	\forall z \in (0, \infty)^V, \qquad \lim_{n \to \infty} F^n(nz) = G(z),
\end{equation}
a condition which will be denoted by $F \in D(G)$. Let $X^{(n)} = (X_v^{(n)}, v \in V)$ for $n = 1, 2, \ldots$ be a sequence of independent and identically distributed random vectors with common distribution $F$. Let $M^{(n)} = \bigl( M_v^{(n)}, v \in V \bigr)$ with $M_v^{(n)} = \max_{i=1,\ldots,n} X_v^{(i)}$ be the vector of component-wise sample maxima. Equation~\eqref{eq:FDG} then means that
\begin{equation}
\label{eq:FDGMn}
	M^{(n)} / n \dto G, \qquad n \to \infty,
\end{equation}
the arrow $\dto$ signifying convergence in distribution.
The choice of the scaling sequence $n$ in~\eqref{eq:FDG} and~\eqref{eq:FDGMn} is dictated by the marginal standardization and implies that the margins of $G$ are unit-Fréchet too, i.e., $G$ is a simple max-stable distribution. 
The latter can be written as
\begin{equation*} 
	G(z)=
	\exp\left(-\mu\left[\left\{x \in [0, \infty)^V : \ \exists v \in V, \, x_v>z_v \right\}\right]\right),
	\qquad z \in (0, \infty]^V,
\end{equation*}
where the exponent measure $\mu$ is a non-negative Borel measure on the punctured orthant $[0, \infty)^V \setminus \{0\}$, finite on subsets bounded away from the origin \citep{haan1977limit, resnick1987extreme}. The function $\ell : [0, \infty)^d \to [0, \infty)$ is defined by
\begin{equation} \label{eqn:stdf}
	\ell(y)
	:= - \ln G(1/y_v, v \in V)
	= \mu\left[\left\{x \in [0, \infty)^V : \ \exists v \in V, \, x_v > 1/y_v \right\}\right].
\end{equation}
More background on multivariate extreme value analysis can be found for instance in the monographs \citet{resnick1987extreme}, \citet{beirlant2004statistics} and \citet{haan2007extreme}.

We can replace the integer $n$ in~\eqref{eq:FDG} by the real scalar $t > 0$: the condition $F \in D(G)$ is equivalent to
\begin{equation} \label{eqn:FtG}
	\forall z \in (0, \infty)^V, \qquad \lim_{t \to \infty} F^t(tz) = G(z).
\end{equation}
By a direct calculation starting from~\eqref{eqn:FtG} it follows that
\begin{equation}
	\label{eqn:pareto_lim}
	\lim_{t\to\infty} \P \left( 
		\forall v \in V, \, X_v / t \leq z_v \mathrel{\Big\vert}
		\max_{v \in V} X_v > t 
	\right)
	=
	\frac{\ln G\bigl(\min(z_v, 1), v\in V\bigr)-\ln G(z)}{\ln G(1, \ldots, 1)},
\end{equation}
for $z \in (0, \infty)^V$, from which we deduce the weak convergence of conditional distributions
\begin{equation} \label{eqn:pareto}
	\left( t^{-1} X \mathrel{\Big\vert} \max_{v \in V} X_v > t \right)
	\dto 
	Y, \qquad t \to \infty,
\end{equation}
where $Y = (Y_v, v \in V)$ is a random vector whose cumulative distribution function is equal to the right-hand side in~\eqref{eqn:pareto_lim}. The law of $Y$ is a multivariate Pareto distribution and has support contained $[0, \infty)^V \setminus [0, 1]^V$. Upon a change in location, it is a member of the family of multivariate generalized Pareto distributions. The latter arise in \citet{rootzen2006multivariate} and \citet[Section~8.3]{beirlant2004statistics} as limit laws of multivariate peaks over thresholds; see also \citet{rootzen2018multivariate}.

\section{Tails of Markov random fields on block graphs} \label{sec:eot}

Let $X = (X_v, v \in V)$ be a non-negative Markov random field with respect to the connected block graph $\G = (V, E)$, or Markov block graph in short.
Suppose that at a given node $u \in V$ the variable $X_u$ exceeds a high threshold, say $t$.
This event can be expected to affect conditional probabilities of the other variables $X_v$ too.
Our main result, Theorem~\ref{prop_main}, states that, starting out from $u$, every other variable $X_v$ feels the impact of the shock at $u$ through a multiplication of increments on the edges forming the unique shortest path from $u$ to $v$. The increments are independent between cliques but possibly dependent within cliques.

After discussing the set-up and the assumptions in Section~\ref{subsec:set-up}, we state and illustrate the main result in Section~\ref{subsec:main}. Consequences for multivariate regular variation and max-domains of attraction are treated in Section~\ref{subsec:mrv}, followed by a focus in Section~\ref{subsec:max-stable} on the special case where clique vectors are max-stable already.

\subsection{Set-up}
\label{subsec:set-up}

We will be making two assumptions on the conditional distribution of $X$ at high levels.
Assumption~\ref{ass:nu} is the main one, as it will determine the limit distribution in Theorem~\ref{prop_main} through the construction in Definition~\ref{def:increments} below.
For a set $A$ and an element $b \in A$, we write $A \setminus b$ rather than $A \setminus \{b\}$.

\begin{assum}{} 
	\label{ass:nu}
	For every clique $C\in \mathcal{C}$ and every node $s\in C$ there exists a probability distribution $\nu_{C,s}$ on $[0, \infty)^{C\setminus s}$ such that, as $t\rightarrow\infty$, we have
	\[
		\mathcal{L}\left(\frac{X_v}{t}, \, v\in {C\setminus s} \mathrel{\Big|} X_{s} = t \right)
		\dto \nu_{C,s}.
	\]
\end{assum}

In the special case that the distribution of the clique vector $X_C$ is max-stable, the limit $\nu_{C,s}$ can be calculated by means of Proposition~\ref{prop:nustdf} below.


\begin{definition}[Increments]
\label{def:increments}
Under Assumption~\ref{ass:nu}, define, for fixed $u \in V$, the following $(|V|-1)$-dimensional non-negative random vector $Z$:
\begin{enumerate}[({Z}1)]
	\item
	For each clique $C$, let $s$ be the separator node in $C$ between $u$ and the nodes in $C$. If $u \in C$, then simply $s = u$, whereas if $u \not\in C$, then $s$ is the unique node in $C$ such that for every $v \in C$, any path from $u$ to $v$ passes through $s$. Note that, for fixed $u$, the node $s$ is a function of $C$, but we will suppress this dependence from the notation.
	\item 
	For each $C$, consider the limit distribution $\nu_{C,s}$ of Assumption~\ref{ass:nu}, with separator node $s \in C$ determined as in~(Z1).
	\item 
	Put $Z := (Z_{s, C\setminus s}, C \in \mathcal{C})$ where, for each $C \in \mathcal{C}$, the random vector $Z_{s, C\setminus s} = (Z_{sv}, v \in C \setminus s)$ has law $\nu_{C, s}$ and where these $|\mathcal{C}|$ vectors are mutually independent as $C$ varies.
\end{enumerate}
\end{definition}

The distribution of the random vector $Z$ in Definition~\ref{def:increments} depends on the source node $u$, but this dependence is suppressed in the notation.
The dimension of $Z$ is indeed equal to $|V| - 1$, since the sets $C \setminus s$ form a partition of $V \setminus u$ as $C$ varies in $\mathcal{C}$.
In fact, in the double index in $Z_{sv}$, every node $v \in V \setminus u$ appears exactly once; the node $s$ is the one just before $v$ itself on the shortest path from $u$ to $v$.

\begin{example}  
	\label{ex1} 
	Consider a Markov random field on the block graph in Figure~\ref{fig:cg7}. Suppose that the variable exceeding a high threshold is the one at node $u = 7$. For paths departing at~$u$, the separator nodes associated to the four cliques are as follows:
	\[
		\begin{array}{lcl}
		\toprule
		\text{clique $C$} & \text{separator node $s \in C$} & \text{node set $C \setminus s$} \\
		\midrule
		\{0, 1, 2\} & 2 & \{0, 1\} \\
		\{2, 3\} & 2 & \{3\}\\
		\{2, 4, 5, 6\} & 6 & \{2, 4, 5\}\\
		\{6, 7\} & 7 & \{6\} \\
		\bottomrule
		\end{array}
	\]
Note that the union over the sets $C \setminus s$ is equal to $\{0, 1, \ldots, 6\} = V \setminus u$.
Assumption~\ref{ass:nu} requires certain joint conditional distributions to converge weakly: as $t\rightarrow\infty$, we have
\begin{align*} 
	\left(\frac{X_0}{t}, \frac{X_1}{t}
	\mathrel{\Big|} X_2=t \right)
	& \dto \nu_{\{0, 1, 2\}, 2} 
	, 
	&
	\left( \frac{X_3}{t} \mathrel{\Big|} X_2=t \right)
	&\dto \nu_{\{2, 3\}, 2} 
	,\\
	\left(\frac{X_2}{t}, \frac{X_4}{t}, \frac{X_5}{t} \mathrel{\Big|}X_6=t \right)
	&\dto \nu_{\{2,4,5,6\}, 6} 
	, &
	\left( \frac{X_6}{t} \mathrel{\Big|} X_7=t \right)
	&\dto \nu_{\{6, 7\}, 7} 
	.
\end{align*}
The random vector $Z$ in Definition~\ref{def:increments}, step~(Z3), is a $7$-dimensional random vector whose joint distribution is equal to the product of the above four distributions:
\begin{equation}
\label{eq:Z:example}
Z:= \left(
\textcolor{orange}{Z_{20}, Z_{21}}; 
\textcolor{purple}{Z_{23}}; 
\textcolor{blue}{Z_{62}, Z_{64}, Z_{65}}; 
\textcolor{darkgreen}{Z_{76}}
\right)
\sim \nu_{\{0,1,2\},2} \otimes \nu_{\{2, 3\}, 2} \otimes \nu_{\{2, 4, 5, 6\}, 6} \otimes \nu_{\{6, 7\}, 7}.
\end{equation}
We think of the random variable $Z_{sv}$ as being associated to the edge $(s, v) \in E$:
\begin{center}	
\begin{tikzpicture} 
	\node[hollow node] (0) at (0,0) {0};
	\node[hollow node] (1) at (0,-1.5)  {1};
	\node[hollow node] (2) at (1.5,-0.75)  {2};
	\node[hollow node] (3) at (1.5,0.75)  {3};
	\node[hollow node] (6) at (4.5,-0.75)  {6};
	\node[hollow node] (4) at (3,-1.5)  {4};
	\node[hollow node] (5) at (3,0)  {5};
	\node[solid node, color=red] (7) at (6,-0.75)  {\textcolor{white}{7}};
	\path[] (0) edge (1)  ;
	\path[color=orange, line width=1.5pt, sloped, anchor=north] (1) edge node{$Z_{21}$} (2);
	\path[color=orange, line width=1.5pt, sloped, anchor=south] (2) edge node{$Z_{20}$} (0)  ;
	\path[] (2) edge (4);
	\path[color=blue, line width=1.5pt, anchor=south] (6) edge node{$Z_{62}$} (2);
	\path[] (2) edge (5);
	\path[] (4) edge (5);
	\path[color=blue, line width=1.5pt, anchor=north, sloped] (4) edge node{$Z_{64}$} (6);
	\path[color=purple, line width=1.5pt, anchor=west] (3) edge node{$Z_{23}$} (2);
	\path[color=darkgreen, line width=1.5pt, anchor=south] (6) edge node{$Z_{76}$ } (7);
	\path[color=blue, line width=1.5pt, anchor=south, sloped] (6) edge node{$Z_{65}$} (5);
\end{tikzpicture}
\end{center}	
By construction, the random sub-vectors $(Z_{20}, Z_{21})$, $Z_{23}$, $(Z_{62}, Z_{64}, Z_{65})$ and $Z_{76}$ are independent from each other and their marginal distributions are $(Z_{20}, Z_{21}) \sim \nu_{\{0, 1, 2\}, 2}$ and so on.
Every node $v \in \{0, 1, \ldots, 6\}$ appears exactly once as a second index of a variable in $Z_{sv}$ in~\eqref{eq:Z:example}.
For each such $v$, the first index $s$ is the node right before $v$ on the path from $u=7$ to $v$.

In the same block graph, we could also suppose that the variable exceeding a high threshold is the one on node $u = 4$. The picture would then change as follows:
\begin{center}
\begin{tikzpicture} 
	\node[hollow node] (0) at (0,0) {0};
	\node[hollow node] (1) at (0,-1.5)  {1};
	\node[hollow node] (2) at (1.5,-0.75)  {2};
	\node[hollow node] (3) at (1.5,0.75)  {3};
	\node[hollow node] (6) at (4.5,-0.75)  {6};
	\node[solid node, color=red] (4) at (3,-1.5)  {\textcolor{white}{4}};
	\node[hollow node] (5) at (3,0)  {5};
	\node[hollow node] (7) at (6,-0.75)  {7};
	\path[] (0) edge (1)  ;
	\path[color=orange, line width=1.5pt, sloped, anchor=north] (1) edge node{$Z_{21}$} (2);
	\path[color=orange, line width=1.5pt, sloped, anchor=south] (2) edge node{$Z_{20}$} (0)  ;
	\path[color=blue, line width=1.5, sloped, anchor=north] (2) edge node{$Z_{42}$} (4);
	\path[] (6) edge (2);
	\path[] (2) edge (5);
	\path[color=blue, line width=1.5pt, anchor=west] (4) edge node{$Z_{45}$}(5);
	\path[color=blue, line width=1.5pt, anchor=north, sloped] (4) edge node{$Z_{46}$} (6);
	\path[color=purple, line width=1.5pt, anchor=west] (3) edge node{$Z_{23}$} (2);
	\path[color=darkgreen, line width=1.5pt, anchor=south] (6) edge node{$Z_{67}$ } (7);
	\path[] (6) edge (5);
\end{tikzpicture}
\end{center}
Again, colour-coded sub-vectors corresponding to cliques are mutually independent. The vectors $(Z_{21}, Z_{21})$ and $Z_{23}$ are equal in distribution to those as when the starting node was $u = 7$, but the vectors $(Z_{42}, Z_{45}, Z_{46})$ and $Z_{67}$ are new. In particular, $Z_{67}$ is not the same as $Z_{76}$, having different indices of the conditioning variable in Assumption~\ref{ass:nu}.
\hfill$\diamondsuit$
\end{example}

If the univariate margins of $X_C$ are of Pareto-type,  Assumption~\ref{ass:nu} implies that the distribution of $X_C$ is regularly varying according to \citet[Corollary~1 and Theorem~2]{segers2020one}; see also Corollary~\ref{cor:max-stable} below. In case the components of $X_C$ are asymptotically independent \citet{resnick1987extreme}, however, the limit measure $\nu_{C,s}$ is degenerate at zero. This knowledge is unhelpful for studying extremes of near independent data \citep{ledford1996statistics, ledford1997modelling, heffernan2004aconditional} because it does not allow for modeling, inference or extrapolation.
	To accommodate cases of asymptotic independence, a refinement of the assumptions would be necessary. This is not pursued in the paper. A starting point would be the studies on Markov chains in \citet{papastat2017extreme} and \citet{papastat2019hidden}.

In Assumption~\ref{ass:nu}, let $\nu_{C, s}^v$ denote the univariate marginal distribution corresponding to node $v \in C \setminus s$.
Recall that $\mathcal{S}$ denotes the set of minimal separator nodes between the cliques in the block graph.

\begin{assum}{}
	\label{ass:zero}
	
	Let $\{u, \ldots, s\}$ be the sequence of nodes of the unique shortest path between two nodes $u\in V$ and $s\in \mathcal{S}$. Let $C$ be any clique which contains $s$, but no other node of $\{u, \ldots, s\}$. If there is an edge $(a,b)$ on the path $\pth{u}{s}$ such that $\nu_{C',a}^b(\{0\}) > 0$, where $C'$ is the (unique) clique containing the nodes $a$ and $b$, then for any $\eta>0$, we have
\begin{equation} \label{eqn:zero}
	\limsup_{\delta\downarrow 0}
	\limsup_{t\rightarrow\infty}
	\sup_{x_s \in [0,\delta]}
	\P\left( 
		\exists v \in C \setminus s : X_v/t > \eta \mid X_s/t = x_s 
	\right)
	=
	0.
\end{equation}


\end{assum}


Figure~\ref{fig:cpc} illustrates the scope of Assumption~\ref{ass:zero}.  For the variables in clique $C'$, Assumption~\ref{ass:nu} implies that as $t\rightarrow \infty$,
	\[
	\mathcal{L}\left(\frac{X_v}{t}, \, v\in {C'\setminus a} \mathrel{\Big|} X_{a} = t \right)
	\dto \nu_{C',a}.
	\]
The univariate margin $\nu_{C',a}^b$ related to component $b \in C' \setminus a$ could have positive mass at zero, which is the condition $\nu_{C',a}^b(\{0\})>0$ in Assumption~\ref{ass:zero}. 
	The condition is similar to the one in \citet[page~860]{segers2020one} for Markov trees and the one for Markov chains in \citet*[Section~4]{papastat2017extreme}. 
	In Figure~\ref{fig:cpc}, consider the clique $C$ with nodes $s,v_1,v_2$. Relation~\eqref{eqn:zero} says that the probability that at least one rescaled variable, $X_{v_1}/t$ or $X_{v_2}/t$, exceeds a small but positive value given that at the separator node, $s$, the value is very small already, is close to zero. In other words,
	in the limit all scaled variables must be smaller than any arbitrary small value, $\eta>0$, given that the scaled variable $X_s/t$ at the separator is known to have a small value, i.e., $x_s\in [0,\delta]$ with $\delta\downarrow 0$. Predecessors of condition~\eqref{eqn:zero} are found in earlier literature on Markov chains: \citet[page~39]{smith1992the} and \citet[page~537]{perfekt1994extremal}. \citet{papastat2017extreme} and \citet{papastat2019hidden} provide limiting results for Markov chains of arbitrary order when the process is allowed to switch between non-extreme and extreme states. Example~7.5 in \citet{segers2007multivariate} provides a Markov chain that does not satisfy Assumption~\ref{ass:zero}.
	
\begin{figure}
\begin{center} 
	\begin{tikzpicture}
		\node[hollow node] (u) at (0,0)  {$u$};
		\node[] (ld) at (1.5,0)  {$\cdots$};
		\node[hollow node] (a) at (3,0)  {$a$};
		\node[hollow node] (b) at (4.5,0)  {$b$};
		\node[] (ld2) at (6,0)  {$\cdots$};
		\node[hollow node] (s) at (7.5,0)  {$s$};
		\node[hollow node] (v1) at (9,0.75)  {$v_1$};
		\node[hollow node] (v2) at (9,-0.75)  {$v_2$};
		\node[] (ld3) at (3.75,0.75)  {$\ldots$};
		\path[->] (u) edge (ld);
		\path[->] (ld) edge (a);
		\path[->] (a) edge (b);
		\path[->] (b) edge (ld2);
		\path[->] (ld2) edge (s);
		\path[] (s) edge (v1);
		\path[] (s) edge (v2);
		\path[] (v2) edge (v1);
		\path[] (a) edge (ld3);
		\path[] (b) edge (ld3);
		\node[] (Cp) at (3.75,1.4)  {$C'$};
		\node[] (C) at (8.5,1.4)  {$C$};
		\begin{scope}[dashed]
			\draw[] (3.75,0.2) circle (1cm);
			\draw[] (8.5,0) circle (1.2cm);
		\end{scope}
	\end{tikzpicture}
\end{center}
\caption{The path from $u$ to $s$ contains edge $(a,b)$ which is part of clique $C'$. The end node $s$ belongs to clique $C$ which contains further nodes $v_1$ and $v_2$. Assumption~\ref{ass:zero} concerns the distribution of $X_{v_j}/t$ given $X_{s}/t$ in case the $b$-th marginal of $\nu_{C',a}$ has positive mass at zero, as explained below the assumption.}
\label{fig:cpc}
\end{figure}
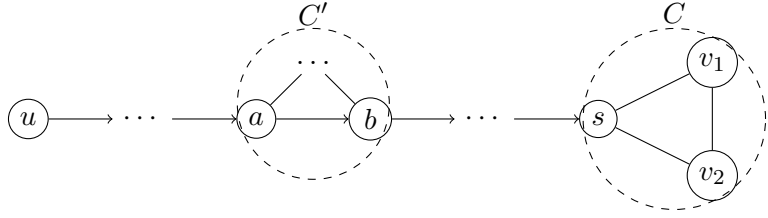

\subsection{Main result}
\label{subsec:main}

\begin{thm}
\label{prop_main}
Let $X=(X_v, v\in V)$ be a non-negative Markov random field with respect to the connected block graph $\G=(V, E)$.
Let Assumptions~\ref{ass:nu} and~\ref{ass:zero} be satisfied.
For a given $u \in V$, let $Z$ be the random vector in Definition~\ref{def:increments}.
Then as $t\rightarrow\infty$, we have
\begin{multline} 
\label{eqn:main_res}
\left( \frac{X_v}{t}, \, v\in V\setminus u  \mathrel{\Big|} X_u=t \right)
\dto (A_{uv}, v\in V\setminus u)
=:A_{u, V\setminus u} \\
\text{where} \quad A_{uv} :=\prod_{e\in \pth{u}{v}}Z_e.
\end{multline}
\end{thm}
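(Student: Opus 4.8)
The plan is to process the cliques of $\G$ outward from the source node $u$ and to establish the joint convergence in~\eqref{eqn:main_res} one clique at a time, by induction. First I would organize the cliques into the associated clique tree (two cliques are adjacent whenever they share a minimal separator in $\mathcal{S}$), root it at a clique containing $u$, and order the cliques by their distance from $u$ in this tree; let $W_k$ denote the union of all cliques within distance $k$. The inductive claim is that $(X_v/t, \, v \in W_k \setminus u \mid X_u = t)$ converges weakly to $(A_{uv}, \, v \in W_k \setminus u)$ with $A_{uv} = \prod_{e \in \pth{u}{v}} Z_e$ and with the increments independent across cliques. The base case consists of the cliques containing $u$: for each the separator is $s = u$, so Assumption~\ref{ass:nu} supplies the marginal limits $\nu_{C,u}$ directly, and since $u$ separates $C \setminus u$ from $C' \setminus u$ for distinct cliques sharing only $u$, the global Markov property gives $X_{C\setminus u} \indep X_{C'\setminus u} \mid X_u$, yielding the product of the laws $\nu_{C,u}$ as claimed.

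For the inductive step I would append a clique $C$ at distance $k+1$, with separator node $s \in W_k$. The engine is a multiplicative propagation argument driven by the Markov property and a scaling identity. Since $s$ separates $C \setminus s$ from the rest of the graph, one has $X_{C\setminus s} \indep X_{W_k} \mid X_s$, and conditioning on $X_s$ gives the mixture representation
\begin{multline*}
	\mathcal{L}\!\left(\tfrac{X_{W_k\setminus u}}{t}, \tfrac{X_{C\setminus s}}{t} \,\Big|\, X_u = t\right) \\
	= \int \mathcal{L}\!\left(\tfrac{X_{W_k\setminus u}}{t} \,\Big|\, X_s = tx,\, X_u = t\right) \otimes \mathcal{L}\!\left(\tfrac{X_{C\setminus s}}{t} \,\Big|\, X_s = tx\right) \, dF_{s,t}(x),
\end{multline*}
where $F_{s,t} = \mathcal{L}(X_s/t \mid X_u = t)$. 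The scaling step converts Assumption~\ref{ass:nu}, which only controls $X_s = t$, into the statement that for fixed $x > 0$ one has $\mathcal{L}(X_{C\setminus s}/t \mid X_s = tx) \dto \mathcal{L}(x Z_{s,C\setminus s})$ with $Z_{s,C\setminus s} \sim \nu_{C,s}$, simply because $\mathcal{L}(X_{C\setminus s}/(tx) \mid X_s = tx) \dto \nu_{C,s}$ as $tx \to \infty$. Because the kernel of the new block depends on $x$ only through the scalar multiple and not on the past, while $Z_{s,C\setminus s} \sim \nu_{C,s}$ itself does not depend on $x$, the limiting increment factors out of the mixing distribution and is therefore independent of the whole of $(A_{uv}, v \in W_k\setminus u)$; together with the induction hypothesis $F_{s,t} \dto \mathcal{L}(A_{us})$ and the identity $A_{uw} = A_{us}\, Z_{sw}$ for $w \in C \setminus s$, passing to the limit inside the integral would deliver the enlarged product structure on $W_{k+1}$.

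The hard part will be the behaviour of this mixture near $x = 0$, which is precisely why Assumption~\ref{ass:zero} is needed. The map $x \mapsto \mathcal{L}(x Z_{s,C\setminus s})$ is continuous on $(0,\infty)$ and extends continuously to $\delta_0$ at $x = 0$, which is exactly what the standard convergence-of-mixtures lemma requires (weak convergence of the mixing measures $F_{s,t}$ together with continuously converging kernels). When no edge on $\pth{u}{s}$ carries an increment with an atom at zero, the limit $\mathcal{L}(A_{us})$ places no mass at $0$ and the integrand near the origin is irrelevant. But when some $\nu_{C',a}^b(\{0\}) > 0$ along the path, $A_{us}$ has positive mass at $0$, and one must verify that $\mathcal{L}(X_{C\setminus s}/t \mid X_s = tx) \dto \delta_0$ as $x \downarrow 0$, uniformly enough to control that atom—this is exactly the content of~\eqref{eqn:zero}, forcing all scaled clique variables to be negligible once the scaled separator value is small. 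I expect the bulk of the technical effort to lie in making this boundary passage rigorous: establishing tightness, proving the continuous convergence of the kernels uniformly on compact subsets of $(0,\infty)$, and splicing in the $x \downarrow 0$ estimate from Assumption~\ref{ass:zero}, after which the induction closes and yields~\eqref{eqn:main_res}.
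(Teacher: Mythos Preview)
Your strategy is essentially the paper's: induction over the cliques (the paper peels off a leaf clique rather than building outward from $u$ in levels, but it is the same induction), the Markov factorization at the separator $s$, the scaling $\mathcal{L}(X_{C\setminus s}/t \mid X_s = tx) \dto x \cdot \nu_{C,s}$ for fixed $x > 0$, and the use of Assumption~\ref{ass:zero} to control the kernel near $x = 0$ when $A_{us}$ has an atom there. The paper executes this via bounded Lipschitz test functions, splitting on $\{X_s/t \geq \delta\}$ versus $\{X_s/t < \delta\}$ and invoking a continuous-convergence theorem (van der Vaart, Theorem~18.11) on the first piece, which is exactly your ``convergence-of-mixtures'' plan.

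One step in your write-up does not go through as stated. Your mixture representation disintegrates over the scalar $X_s/t$ alone, which leaves a first factor $\mathcal{L}(X_{W_k\setminus u}/t \mid X_s = tx,\, X_u = t)$ depending on $x$ through a \emph{regular conditional distribution}. The induction hypothesis only delivers the marginal weak limit of $X_{W_k\setminus u}/t \mid X_u = t$, not these $x$-indexed conditional laws, so you have no handle on the first kernel as $t \to \infty$. The fix---and this is what the paper does---is to integrate over the full vector $y = X_{W_k\setminus u}/t$ rather than just its $s$-coordinate: write
\[
\E\bigl[f(X_{W_k\setminus u}/t,\, X_{C\setminus s}/t) \mid X_u = t\bigr]
= \int g_t(y)\, \P\bigl(X_{W_k\setminus u}/t \in \diff y \mid X_u = t\bigr)
\]
with $g_t(y) = \E[f(y, X_{C\setminus s}/t) \mid X_s = ty_s]$. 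Now the integrating measure is exactly what the induction hypothesis controls, and $g_t(y(t)) \to \E[f(y,\, y_s Z_{s,C\setminus s})]$ whenever $y(t) \to y$ with $y_s > 0$; your Assumption~\ref{ass:zero} argument then patches the region $\{y_s \text{ small}\}$ precisely as you outlined.
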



\begin{rmk}
\label{rmk:blockvstree}
If the block graph is a tree, Theorem~\ref{prop_main} reduces to Theorem~1 in \citet{segers2020one}.
In the more general case considered here, the increments $Z_e$ can be dependent within a block, although they are still independent between blocks.
Note that for a single variable $A_{uv}$, the increments $Z_e$ appearing in~\eqref{eqn:main_res} are independent, even for a block graph that is not a tree.
The difference between a tree and a more general block graph thus manifests itself in the joint distribution of the random variables $A_{uv}$ for $v \in V \setminus u$.
\end{rmk}

\begin{example} 
	We continue with Example~\ref{ex1}. Let the variable exceeding a high threshold be the one on node $u=7$. The conclusion of Theorem~\ref{prop_main} is that as $t\rightarrow \infty$, we have
	\[
	\bigl( X_v/t, \, v\in \{0,1,\ldots,6\} \mid X_7=t \bigr)
	\dto
	\bigl( A_{7v}, \, v\in \{0,1,\ldots,6\} \bigr),
	\]
	where, in the notation of Example~\ref{ex1}, the limiting variables have the following structure:
	\begin{equation*}
	\begin{array}{lclcl}
	A_{7,\{0,1,2\}}
	&=&(A_{70},\, A_{71}, \, A_{72})
	&=&\textcolor{darkgreen}{Z_{76}}\textcolor{blue}{Z_{62}}
	(\textcolor{orange}{Z_{20}},\,
	\textcolor{orange}{Z_{20}},\, 
	1),\\[1ex]
	A_{7,\{2,3\}}
	&=&(A_{72}, A_{73})
	&=&\textcolor{darkgreen}{Z_{76}}\textcolor{blue}{Z_{62}}
	(1,\, \textcolor{purple}{Z_{23}}), \\[1ex]
	A_{7,\{2,4,5,6\}}
	&=&(A_{72},\, A_{74},\, A_{75},\, A_{76})
	&=&
	\textcolor{darkgreen}{Z_{76}}(\textcolor{blue}{Z_{62}}, 
	\textcolor{blue}{Z_{64}}, \textcolor{blue}{Z_{65}}, 1),\\[1ex]
	A_{7,\{6\}}
	&=&A_{76}
	&=&\textcolor{darkgreen}{Z_{76}}.
	\end{array}
	\end{equation*}
	The limit vector $(A_{7v})_v$ is similar to the one of a Markov field with respect to the tree formed by the unique shortest paths from node $u=7$ to the other nodes.
	The difference is that within a block, the multiplicative increments need not be independent (Remark~\ref{rmk:blockvstree}).
	\hfill$\diamondsuit$
\end{example}


\begin{rmk}
	\label{rmk:XXA}
	A useful result that we will need further on is that the convergence in~\eqref{eqn:main_res} implies a self-scaling property of the Markov random field at high levels: as $t \to \infty$, we have
	\begin{equation} \label{eq:XXA}
	\left(\frac{X_v}{X_u}, \, v\in V\setminus u \mathrel{\Big|} X_u>t \right)
	\dto (A_{u,v}, v\in V\setminus u).
	\end{equation}
	The proof of~\eqref{eq:XXA} is the same as that of Corollary~1 in \cite{segers2020one}.
\end{rmk}

\subsection{Max-domains of attraction}
\label{subsec:mrv}

In multivariate extreme value analysis, it is common to work with standardized margins, often the unit-Pareto or unit-Fr\'{e}chet distribution. Both cases are covered in the next corollary, which, in the terminology of Section~\ref{sec:evt-intro}, states that $X$ is multivariate regularly varying and in the domain of attraction of a max-stable distribution $G$ with exponent measure $\mu$ determined by the limits $A_{uv}$ in \eqref{eqn:main_res}. Let $\mathcal{M}_0$ denote the set of Borel measures $\mu$ on $[0, \infty)^V \setminus \{0\}$ such that $\mu(B)$ is finite whenever the Borel set $B$ is contained in a set of the form $\{x \in [0, \infty)^V : \max_{v \in V} x_v \ge \varepsilon\}$ for some $\varepsilon > 0$. The set $\mathcal{M}_0$ is equipped with the smallest topology that makes the evaluation mappings $\mu \mapsto \int f \, \diff \mu$ continuous, where $f$ varies over the collection of real-valued, bounded, continuous functions on $[0, \infty)^V \setminus \{0\}$ that vanish in a neighbourhood of the origin.

\begin{cor}
	\label{cor:max-stable}
	If, in addition to the assumptions in Theorem~\ref{prop_main}, we have $t \P(X_v > t) \to 1$ as $t \to \infty$ for all $v \in V$, then in $\mathcal{M}_0$ we have the convergence
	\begin{equation}
	\label{eq:tPt2nu}
		t \P(t^{-1} X \in \, \cdot \, ) \to \mu, \qquad t \to \infty,
	\end{equation}
	with limit measure $\mu$ on $[0, \infty)^V \setminus \{0\}$ determined by
	\begin{equation}
	\label{eq:intfdnu}
		\int f(x) \, \1\{x_u > 0\} \, \diff \mu(x)
		= \E \left[ \int_0^\infty f(z A_{u}) z^{-2} \, \diff z \right]
	\end{equation}
	for $u \in V$, Borel-measurable $f : [0, \infty)^V \setminus \{0\} \to [0, \infty]$, and $A_u = (A_{uv})_{v \in V}$ as in \eqref{eqn:main_res} with additionally $A_{uu} = 1$. 
	As a consequence, for $x \in (0, \infty)^V$, we have
	\begin{equation}
	\label{eq:G}
		\lim_{n \to \infty} \left[\P(X_v \le n x_v, v \in V)\right]^n
		= G(x) = \exp[- \mu(\{y : \exists v \in V, y_v > x_v\})].
	\end{equation}
\end{cor}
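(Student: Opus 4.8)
The plan is to establish the multivariate regular variation statement~\eqref{eq:tPt2nu} first and then deduce the max-domain-of-attraction conclusion~\eqref{eq:G} by the standard passage from regular variation to convergence of powers of the distribution function. Since the topology on $\mathcal{M}_0$ is generated by integrals against bounded continuous functions vanishing near the origin, it suffices to prove that $t\,\E[f(X/t)] \to \int f \, \diff\mu$ for every such $f$, where $\mu$ is the measure to be identified. The probabilistic content is already delivered by Remark~\ref{rmk:XXA}; combined with the hypothesis $t\,\P(X_v>t)\to 1$, which says each margin is regularly varying of index $-1$, what remains is essentially a careful measure-theoretic assembly.

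The core computation is a polar-type decomposition relative to a single coordinate. Fix $u\in V$ and write $X/t = (X_u/t)\,(X/X_u)$, viewing $X_u/t$ as a radius and $X/X_u$ as a direction in $\{\theta\in[0,\infty)^V:\theta_u=1\}$. For $s>0$ and a continuity set $D$ of the law of $A_u$, I would factor
\[
	t\,\P\bigl(X_u/t>s,\ X/X_u\in D\bigr)
	= \bigl(t\,\P(X_u>ts)\bigr)\,\P\bigl(X/X_u\in D\mid X_u>ts\bigr).
\]
The first factor tends to $s^{-1}$ because $t\,\P(X_u>ts)=s^{-1}(ts)\P(X_u>ts)\to s^{-1}$, and the second tends to $\P(A_u\in D)$ by Remark~\ref{rmk:XXA} applied with threshold $ts\to\infty$. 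Product sets of this form are convergence-determining, so $t\,\P\bigl((X_u/t,X/X_u)\in\cdot\bigr)$ converges, away from radius zero, to the product measure $(r^{-2}\,\diff r)\otimes\mathcal{L}(A_u)$. Pushing this forward through the continuous map $(r,\theta)\mapsto r\theta$ and renaming $r=z$ yields a measure supported on $\{x_u>0\}$ whose action on test functions is exactly the right-hand side of~\eqref{eq:intfdnu}; here the convention $A_{uu}=1$ makes the radius coincide with the $u$-th coordinate.

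To obtain the global convergence~\eqref{eq:tPt2nu} and a single well-defined $\mu\in\mathcal{M}_0$, I would patch the coordinates together. Choose $c\in(0,1)$ and a partition of unity $\{g_u\}_{u\in V}$ subordinate to the open cones $\{x:x_u>c\max_{w}x_w\}$, which cover $[0,\infty)^V\setminus\{0\}$. Then any admissible $f$ decomposes as $f=\sum_u fg_u$, and since $f$ vanishes on $\{\max_w x_w<\varepsilon\}$ for some $\varepsilon>0$, each $fg_u$ is bounded, continuous, and supported in $\{x_u>c\varepsilon\}$, hence detectable by conditioning on $X_u$ being large. Applying the core computation to each $fg_u$ and summing gives $t\,\E[f(X/t)]\to\int f\,\diff\mu$, where $\mu$ is the resulting limit; finiteness of $\mu$ on sets bounded away from the origin follows from $\int_s^\infty r^{-2}\,\diff r=s^{-1}<\infty$, so $\mu\in\mathcal{M}_0$. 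Formula~\eqref{eq:intfdnu} for each fixed $u$ then follows by restricting to test functions supported in $\{x_u>c\varepsilon\}$, where the core computation identifies $\mu$ with the pushforward measure, and letting $\varepsilon\downarrow 0$; consistency of the per-coordinate formulas on the overlaps $\{x_u>0\}\cap\{x_w>0\}$ is automatic, since each is the restriction of the single global limit $\mu$.

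Finally,~\eqref{eq:G} follows from~\eqref{eq:tPt2nu} by the standard argument. For $x\in(0,\infty)^V$ the set $\{y:\exists v\in V,\ y_v>x_v\}$ is bounded away from the origin, and its boundary lies in $\bigcup_v\{y_v=x_v\}$, which is $\mu$-null: using the $u$-representation, $\mu(\{y_v=x_v\})=\E\bigl[\int_0^\infty \1\{zA_{uv}=x_v\}\,z^{-2}\,\diff z\bigr]=0$, because $x_v>0$ forces the inner integral to vanish whether $A_{uv}$ is positive or zero. Hence $n\,\P(\exists v:X_v>nx_v)\to\mu(\{y:\exists v,\ y_v>x_v\})$, and writing $\bigl[\P(X_v\le nx_v,\,v\in V)\bigr]^n=\bigl[1-n^{-1}\{n\,\P(\exists v:X_v>nx_v)\}\bigr]^n$ and letting $n\to\infty$ gives $G(x)$ as claimed. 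I expect the main obstacle to be this assembly step rather than any single estimate: combining the single-coordinate conditioning limits into a genuine measure on the whole punctured orthant without double-counting on overlaps, and verifying the $\mu$-null-boundary conditions needed both for the continuous-mapping pushforward and for the final max-stable limit. The per-coordinate computation itself is routine once Remark~\ref{rmk:XXA} and the marginal regular variation are in hand.
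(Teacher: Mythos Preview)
Your argument is correct. The paper's own proof is extremely terse: it simply invokes Theorem~2 of \citet{segers2020one} (together with Remark~\ref{rmk:XXA}) to obtain the measure convergence~\eqref{eq:tPt2nu}--\eqref{eq:intfdnu}, and then cites \citet[Proposition~5.17]{resnick1987extreme} for the passage to~\eqref{eq:G}. What you have written is essentially an unfolding of that first citation: the polar decomposition $X/t=(X_u/t)(X/X_u)$, the factorization into a radial part governed by $t\P(X_u>t)\to 1$ and an angular part governed by Remark~\ref{rmk:XXA}, and the partition-of-unity patching over the cones $\{x_u>c\max_w x_w\}$ is precisely the machinery behind the cited theorem. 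So the underlying approach is the same; the difference is that the paper delegates the assembly step to an external reference, whereas you carry it out explicitly. Your version has the advantage of being self-contained and of making visible where each hypothesis is used; the paper's version has the advantage of brevity and of not re-proving a result already available in the literature.
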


The indicator $\1\{x_u > 0\}$ in \eqref{eq:intfdnu} can often be omitted, simplifying the formula for $G$.

\begin{cor}
	\label{cor:max-stable:stdf}
	In Corollary~\ref{cor:max-stable}, if $u \in V$ is such that the increments $Z_e$ satisfy $\E[Z_e] = 1$ for every $e \in \pth{u}{v}$ and every $v \in V \setminus u$, then $\mu(\{x : x_u = 0\}) = 0$ and we actually have
	\begin{equation}
	\label{eq:intfdnueasy}
		\int f(x) \, \diff \mu(x)
		= \E \left[ \int_0^\infty f(z A_{u}) z^{-2} \, \diff z \right]
	\end{equation}
	for Borel-measurable $f : [0, \infty)^V \setminus \{0\}$. In particular, the stable tail dependence function of the max-stable limit $G$ in \eqref{eq:G} is then
	\begin{equation}
	\label{eq:A2stdf}
		\ell(x) = - \log G(1/x_v, v \in V) =
		\E\left[ \max \left\{ x_v A_{u,v}, v \in V \right\} \right],
		\qquad x \in [0, \infty)^V.
	\end{equation}
	For $e = (a, b)$, the condition $\E[Z_{ab}] = 1$ is equivalent to $\P(Z_{ba} > 0) = 1$.
\end{cor}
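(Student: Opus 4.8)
The whole proof runs on the integral representation \eqref{eq:intfdnu} from Corollary~\ref{cor:max-stable}, which for each source node $u$ identifies the restriction of $\mu$ to $\{x_u>0\}$ with a radial average of the law of $A_u=(A_{uv})_{v\in V}$. The plan is to compute the masses of a handful of explicit sets by reducing the inner integral to elementary one-dimensional integrals of the form $\int_0^\infty \1\{z>c\}\,z^{-2}\,\diff z = 1/c$, and then to match these against the marginal normalisation $\mu(\{x:x_v>1\})=1$, which follows from $t\,\P(X_v>t)\to 1$ together with \eqref{eq:tPt2nu}.

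\emph{No mass on $\{x_u=0\}$ and removal of the indicator.} First I would take $f(x)=\1\{x_v>1\}$ in \eqref{eq:intfdnu}; since $zA_{uv}>1 \iff z>1/A_{uv}$, the inner integral evaluates to $A_{uv}$ (and to $0$ on $\{A_{uv}=0\}$), giving $\mu(\{x_u>0,\,x_v>1\})=\E[A_{uv}]$. Because the increments along the single path $\pth{u}{v}$ are independent (Remark~\ref{rmk:blockvstree}) and each has mean one by hypothesis, $\E[A_{uv}]=\prod_{e\in\pth{u}{v}}\E[Z_e]=1=\mu(\{x:x_v>1\})$. Hence $\mu(\{x_u=0,\,x_v>1\})=0$, and by the order-$(-1)$ homogeneity of $\mu$ also $\mu(\{x_u=0,\,x_v>0\})=0$; taking the union over the finitely many $v\in V\setminus u$ yields $\mu(\{x_u=0\})=0$. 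Consequently the factor $\1\{x_u>0\}$ in \eqref{eq:intfdnu} equals one $\mu$-almost everywhere, which is exactly \eqref{eq:intfdnueasy}.

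\emph{Stable tail dependence function.} With the indicator removed I would apply \eqref{eq:intfdnueasy} to $f(y)=\1\{\exists v\in V: y_v>1/x_v\}$, so that $\ell(x)=\mu(\{y:\exists v,\,y_v>1/x_v\})$ by \eqref{eqn:stdf} and \eqref{eq:G}. Inside the expectation, $\exists v: zA_{uv}>1/x_v$ is equivalent to $z>1/\max_v x_vA_{uv}$ (using $A_{uu}=1$), so the radial integral equals $\max_v x_vA_{uv}$ and \eqref{eq:A2stdf} follows; finiteness is clear since $\E[\max_v x_vA_{uv}]\le\sum_v x_v\,\E[A_{uv}]=\sum_v x_v<\infty$.

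\emph{The equivalence $\E[Z_{ab}]=1\iff\P(Z_{ba}>0)=1$.} For the edge $e=(a,b)$ the shortest path is the single edge, so $A_{ab}=Z_{ab}$ and $A_{ba}=Z_{ba}$; I would work with the bivariate projection $\mu_{ab}$ of $\mu$ onto coordinates $(a,b)$. Applying the representation with source $u=a$ as in the first step gives $\mu_{ab}(\{x_a>0,\,x_b>1\})=\E[Z_{ab}]$, whence $\mu_{ab}(\{x_a=0,\,x_b>1\})=1-\E[Z_{ab}]$ by the marginal normalisation. The key point is to recompute this same axis mass from the \emph{other} source $u=b$: there the representation sends $z\mapsto(zZ_{ba},z)$, so the event $\{x_a=0\}$ corresponds precisely to the degenerate event $\{Z_{ba}=0\}$, and $\mu_{ab}(\{x_a=0,\,x_b>1\})=\P(Z_{ba}=0)\int_1^\infty z^{-2}\,\diff z=\P(Z_{ba}=0)$. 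Equating the two expressions yields the identity $\E[Z_{ab}]=\P(Z_{ba}>0)$, of which the stated equivalence is the special case. I expect this last step to be the main obstacle, as it requires invoking Corollary~\ref{cor:max-stable} simultaneously for two different source nodes and correctly attributing the mass on the coordinate axis to the atom at zero of the reversed increment.
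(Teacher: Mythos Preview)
Your argument is correct and follows the same route as the paper, which simply delegates the key computations to Corollaries~3 and~4 of \citet{segers2020one}: you have unpacked those results explicitly, computing $\mu(\{x_u>0,\,x_v>1\})=\E[A_{uv}]=1$ via the radial integral (this is the content of Corollary~4 there), and deriving the identity $\E[Z_{ab}]=\P(Z_{ba}>0)$ by evaluating the same axis mass $\mu(\{x_a=0,\,x_b>1\})$ from the two source nodes $a$ and $b$ (this is the time-reversal duality of Corollary~3 there). The derivation of~\eqref{eq:A2stdf} from~\eqref{eq:intfdnueasy} is identical to the paper's.
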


\begin{proof}[Proof of Corollary~\ref{cor:max-stable}]
	Thanks to Remark~\ref{rmk:XXA} above, the measure convergence~\eqref{eq:tPt2nu} follows from Theorem~2 in \citet{segers2020one} . By a standard argument \citep[Proposition~5.17]{resnick1987extreme}, the latter implies that $t \P(\exists v \in V, X_v > tx_v)$ converges to $\mu(\{y : \exists v \in V, y_v > x_v \})$ for $x \in (0, \infty)^V$, yielding \eqref{eq:G}.
\end{proof}

\begin{proof}[Proof of Corollary~\ref{cor:max-stable:stdf}]
	The equivalence of $\E[Z_{ab}] = 1$ and $\P(Z_{ba} > 0) = 1$ is a consequence of Corollary~3 in \citet{segers2020one}, while formula~\eqref{eq:intfdnueasy} follows from Corollary~4 in the same source and the fact that $\E[A_{uv}] = \prod_{e \in \pth{u}{v}} \E[Z_e] = 1$ for all $v \in V \setminus u$; note that the edges $e$ on a path $\pth{u}{v}$ all belong to different blocks, implying the independence of the increments $Z_e$. In combination with the identity for $G$ in~\eqref{eq:G}, setting $f$ to be the indicator function of the set $\{ y : \max_{v \in V} x_v y_v > 1\}$ in \eqref{eq:intfdnueasy} yields \eqref{eq:A2stdf} via
	\begin{align*}
		- \log G(1/x_v, v \in V) 
		= \int f \, \diff \mu 
		&= \E \left[ \int_0^\infty \1 \left\{ \max_{v \in V}  x_v A_{uv} > z^{-1} \right\} z^{-2} \, \diff z \right] \\
		&= \E\left[ \max \left\{ x_v A_{uv}, v \in V \right\} \right].
		\qedhere
	\end{align*}
\end{proof}


\subsection{Special case: max-stable clique vectors}
\label{subsec:max-stable}

The limit distribution in Theorem~\ref{prop_main} is determined by the graph structure and the clique-wise limit distributions $\nu_{C,s}$ in Assumption~\ref{ass:nu} via Definition~\ref{def:increments}. The next result provides those limits $\nu_{C,s}$ in the special case that $X_C$ follows a max-stable distribution. 
It is a reformulation of Example~8.4 in \citet{heffernan2004aconditional} in terms of the stable tail dependence function~$\ell$, allowing for the limit distribution $\nu_1$ to have margins with positive mass at the origin; take for instance $d = 2$ and $\ell(x_1,x_2) = x_1+x_2$.
Since the result is not related to graphical models, we cast it in terms of a random vector $(X_1,\ldots,X_d)$.

\begin{prop}
	\label{prop:nustdf}
Let $X = (X_1,\ldots,X_d)$ have a max-stable distribution $G$ with unit-Fréchet margins and stable tail dependence function $\ell$. If $\ell$ has a continuous first-order partial derivative $\dot{\ell}_1$ with respect to its first argument, then
\[
	\left( \frac{X_j}{t}, \, j \in \{2,\ldots,d\} \mathrel{\Big|} X_1 = t \right)
	\dto
	\nu_{1}, \qquad t \to \infty,
\]
where $\nu_1$ is a probability distribution with support contained in $[0, \infty)^{d-1}$ and determined by
\begin{equation} \label{eqn:Xhr}
	\forall x \in (0, \infty)^{d-1}, \qquad
	\nu_{1}([0, x]) = \dot{\ell}_{1} (1, 1/x_2, \ldots, 1/x_d).
\end{equation}
\end{prop}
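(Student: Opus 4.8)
The plan is to represent the conditional law of $(X_2,\ldots,X_d)$ given $X_1=t$ through a ratio of partial derivatives of $G$, to evaluate that derivative explicitly from the max-stable representation, and then to pass to the limit $t\to\infty$ after the substitution $y_j\mapsto tx_j$. The whole simplification will be powered by the homogeneity of $\ell$.

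First I would recall from~\eqref{eqn:stdf} that $G(x)=\exp\{-\ell(1/x_1,\ldots,1/x_d)\}$ and that the first margin is unit-Fréchet with density $f_1(t)=t^{-2}e^{-1/t}$. Differentiating $G$ in its first argument by the chain rule gives
\[
  \partial_1 G(x)=G(x)\,x_1^{-2}\,\dot\ell_1(1/x_1,\ldots,1/x_d),
\]
which is continuous precisely because $\dot\ell_1$ is assumed continuous. Since $\partial_1 G$ is continuous and $X_1$ is absolutely continuous, a version of the conditional distribution function is obtained by disintegration,
\[
  \P\bigl(X_2\le y_2,\ldots,X_d\le y_d\mid X_1=t\bigr)=\frac{\partial_1 G(t,y_2,\ldots,y_d)}{f_1(t)},
\]
the identity being justified by the fundamental theorem of calculus: $s\mapsto G(s,y_2,\ldots,y_d)$ is the integral of $\partial_1 G(\cdot,y_2,\ldots,y_d)$, so integrating the right-hand side against $f_1$ recovers $\P(X_1\le s,X_2\le y_2,\ldots,X_d\le y_d)$ for every $s$.

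Next I would exploit homogeneity. A stable tail dependence function is homogeneous of degree one, $\ell(\lambda y)=\lambda\,\ell(y)$ for $\lambda>0$ (this follows from the scaling $\mu(\lambda\,\cdot)=\lambda^{-1}\mu(\cdot)$ of the exponent measure), and differentiating this identity in $y_1$ shows that $\dot\ell_1$ is homogeneous of degree zero, $\dot\ell_1(\lambda y)=\dot\ell_1(y)$. Substituting $y_j=tx_j$ with $x_j>0$, degree-zero homogeneity collapses $\dot\ell_1(1/t,1/(tx_2),\ldots,1/(tx_d))$ to $\dot\ell_1(1,1/x_2,\ldots,1/x_d)$, while degree-one homogeneity gives $\ell(1/t,1/(tx_2),\ldots,1/(tx_d))=t^{-1}\ell(1,1/x_2,\ldots,1/x_d)$. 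Dividing the numerator $G(t,tx_2,\ldots,tx_d)\,t^{-2}\dot\ell_1(1,1/x_2,\ldots,1/x_d)$ by $f_1(t)=t^{-2}e^{-1/t}$ therefore leaves
\[
  \exp\!\Bigl\{t^{-1}\bigl(1-\ell(1,1/x_2,\ldots,1/x_d)\bigr)\Bigr\}\;\dot\ell_1(1,1/x_2,\ldots,1/x_d),
\]
whose exponential prefactor tends to $1$ as $t\to\infty$. Hence the conditional distribution function converges, for every $x\in(0,\infty)^{d-1}$, to $\dot\ell_1(1,1/x_2,\ldots,1/x_d)$, which is~\eqref{eqn:Xhr}.

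The hard part will not be the calculus but two regularity matters. First, the disintegration formula for the conditional distribution must be justified cleanly, which is exactly where the continuity of $\partial_1 G$---inherited from the hypothesis on $\dot\ell_1$---together with absolute continuity of the unit-Fréchet margin is used. Second, I must check that $x\mapsto\dot\ell_1(1,1/x_2,\ldots,1/x_d)$ is a genuine distribution function on the open orthant (monotone, bounded by one, with the correct limits) so that pointwise convergence upgrades to weak convergence $\dto\nu_1$; here one must permit $\nu_1$ to place atoms on the boundary of $[0,\infty)^{d-1}$, as in the independence case $\ell(x_1,x_2)=x_1+x_2$ where $\dot\ell_1\equiv1$ and $\nu_1=\delta_0$. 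Because the proposition only prescribes the value of $\nu_1([0,x])$ for $x$ in the open orthant, these boundary atoms are continuity points of no concern and the argument goes through.
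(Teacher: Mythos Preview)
Your calculation is essentially the paper's: both compute the conditional distribution function as a ratio involving $\partial_1 G$ (the paper routes this through the copula $K$ of $G$, you differentiate $G$ directly---these are equivalent), and both use the degree-zero homogeneity of $\dot\ell_1$ to kill the $t$-dependence, leaving the exponential prefactor to tend to~$1$.

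The one place your outline stops short is the step you yourself flag as ``the hard part'': verifying that $x\mapsto\dot\ell_1(1,1/x_2,\ldots,1/x_d)$ is a genuine cumulative distribution function, in particular that it tends to $1$ as all $x_j\to\infty$. Pointwise convergence of distribution functions only yields weak convergence if no mass escapes to infinity, so this limit is not a formality. Concretely, you need $\dot\ell_1(1,y_2,\ldots,y_d)\to\dot\ell_1(1,0,\ldots,0)=1$ as $y_j\downarrow 0$; the value $1$ comes from $\ell(y_1,0,\ldots,0)=y_1$, but the continuity of $\dot\ell_1$ at this boundary point is not obviously covered by the hypothesis and requires an argument. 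The paper supplies one: it establishes uniform tightness of the family of conditional laws (reducing to the bivariate margins) by invoking Attouch's theorem---continuous convergence of convex functions $y_1\mapsto\ell(y_1,y_2)$ as $y_2\downarrow 0$ forces convergence of their derivatives at points of differentiability---and then appeals to Prohorov's theorem to upgrade pointwise convergence to weak convergence. You should either carry out this step or make explicit that your continuity assumption on $\dot\ell_1$ extends to the boundary of the orthant.
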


The following example will play a key role in the next section. It provides the form of the probability measure $\nu_1$ in the special case the clique vectors $X_C$ follow a max-stable Hüsler--Reiss distribution.

\begin{example}\label{exampleHR}
	Suppose $X=(X_1, \ldots, X_d)$ follows a max-stable Hüsler--Reiss distribution with unit-Fréchet margins and parameter matrix $\Delta=\{\delta_{ij}^2\}$; see Section~\ref{ssec:def_hr_model} below for a more detailed description. The corresponding stable tail dependence function $\ell$ has a continuous first-order partial derivatives, so that $X/t\mid X_1=t$ converges weakly to some distribution $\nu_1$ determined by $\ell$ via~\eqref{eqn:Xhr}. By Remark~\ref{rmk:XXA}, the limit must be the same as the one of $(X/X_1 \mid X_1 > t)$ as $t \to \infty$.
	But Theorem~2 in \citet{engelke2014estimation} states that, as $t \to \infty$,  
	\begin{equation*}
	 \label{eqn:log_hr}
		\bigl( \ln X_j - \ln X_1, \, j =2, \ldots, d \mid X_1 > t \bigr)
		\dto
		\mathcal{N}_{d-1} \bigl(
		\mu_1(\Delta), \Psi_{1}(\Delta)
		\bigr),
	\end{equation*} 
for a mean vector $\mu_1=-2(\delta^2_{1i}, i=2, \ldots, d)$ and covariance matrix 
\[
(\Psi_1)_{ij}=2(\delta_{1i}^2+\delta_{1j}^{2}-\delta_{ij}^2), \qquad i,j=2, \ldots, d.
\]
Hence $X/X_1\mid X_1>t$ and $X/X_1\mid X_1=t$ converge to a multivariate log-normal distribution with the same parameters, $\mu_1(\Delta)$ and $\Psi_1(\Delta)$.
The limit is confirmed in \citet[Example~4]{segers2020one} when $X$ has only two elements.
 
Corollary~\ref{cor:max-stable:stdf} requires that $\E[Z_e]=1$. In the current example, $(X/X_1)\mid X_1>t$ and $(X/t)\mid X_1=t$ converge weakly to the log-normal random vector $(Z_{1i}, i = 2,\ldots,d)$. Then we have $\E[Z_{1i}]=\exp(\mu_{1i}+(\Psi_1)_{ii}/2)$ with $\mu_{1i}=-2\delta^2_{1i}$ and $(\Psi_1)_{ii}=2(\delta^2_{1i}+\delta^2_{1i}+0)=4\delta^2_{1i}$, confirming that $\E[Z_{1i}] = 1$ for every $i=2, \ldots, d$. 
 \hfill$\diamondsuit$
\end{example}


\section{Cliques in the Hüsler--Reiss domain of attraction}
\label{sec:hr}

We will apply Theorem~\ref{prop_main} to a Markov random field $X$ with respect to a block graph $\G = (V, E)$ such that for every (maximal) clique $C \in \mathcal{C}$, the sub-vector $X_C = (X_v, v \in C)$ satisfies Assumption~\ref{ass:nu} with $\nu_{C, s}$ being the one as in Example~\ref{exampleHR}, i.e., a multivariate log-normal distribution with mean $\mu_{C,s}(\Delta_C)$ and covariance matrix $\Psi_{C,s}(\Delta_C)$.

In Proposition~\ref{prop:log-excess}, we find that the limit random vector $A_u$ in Theorem~\ref{prop_main} is multivariate log-normal with mean vector and covariance matrix related to the graph structure.
Moreover, $X$ is in the max-domain of attraction of a max-stable Hüsler--Reiss whose parameter matrix can be derived in a simple way from the matrices $\Delta_C$ (Proposition~\ref{prop_HRattractor}).
Further, the associated multivariate Pareto distribution is an extremal graphical model in the sense of \citet{engelke2020graphical} and this with respect to the same block graph $\G$ (Proposition~\ref{prop:extr_gm}).
The elegant form of the parameter matrix makes this family a suitable candidate for modelling extremes of asymptotically dependent distributions; see also the discussion in \citet{strokorb2020extremal} to \citet{engelke2020graphical} on the issue of extremal independence and disconnected graphs.
Finally, we show that the parameters of the limiting max-stable Hüsler--Reiss distribution are still identifiable in case some variables are latent, and this if and only if every node with a latent variable belongs to at least three different cliques (Proposition~\ref{prop:identif_hr}).
The proofs of the results in this section are given in Appendix~\ref{app:hr}.

\subsection{Max-stable Hüsler--Reiss distribution} \label{ssec:def_hr_model}

The max-stable Hüsler--Reiss distribution arises as the limiting distribution of normalized component-wise sample maxima of a triangular array of row-wise independent and identically distributed Gaussian random vectors with correlation matrix that depends on the sample size \citep{husler1989maxima}.
The Gaussian distribution is in the max-domain of attraction of the Gumbel distribution, but here we transform the margins to the unit-Fréchet distribution.
Let $\pnorm$ denote the standard normal cumulative distribution function.
Recall from~\eqref{eqn:stdf} the stdf $\ell$ of a general max-stable distribution $G$.

The stdf of the bivariate Hüsler--Reiss distribution with parameter $\delta \in (0, \infty)$ is
\begin{equation}
\label{eq:stdf:HR:2}
	\ell_\delta(x, y) = x \pnorm\left(\delta+\frac{\ln(x/y)}{2\delta}\right)
	+ y \pnorm\left(\delta+\frac{\ln(y/x)}{2\delta}\right),
	\qquad (x, y) \in (0, \infty)^2,
\end{equation}
with obvious limits as $x \to 0$ or $y \to 0$.
The boundary cases $\delta \to \infty$ and $\delta \to 0$ correspond to independence, $\ell_\infty(x,y) = x+y$, and co-monotonicity, $\ell_0(x, y) = \max(x, y)$, respectively.
The limit distribution in Proposition~\ref{prop:nustdf} can be calculated explicitly and is equal to the one of the log-normal random variable $\exp\{2 \delta (Z - \delta)\}$, with $Z$ a standard normal random variable \citep[Example~4]{segers2020one}.

\bgroup
\color{darkgray}
\egroup

To introduce the multivariate Hüsler--Reiss distribution, we follow the exposition in \citet{engelke2014estimation}. 
Let $W$ be a finite set with at least two elements and let $\rho(1), \rho(2), \ldots$ be a sequence of $W$-variate correlation matrices, i.e., $\rho(n) = (\rho_{ij}(n))_{i,j \in W}$. 
Assume the limit matrix $\Delta = (\delta_{ij}^2)_{i,j \in W}$ -- denoted by $\Lambda$ in the cited article -- exists:
\begin{equation}
\label{eq:Delta}
	\lim_{n \to \infty} \bigl( 1 - \rho_{ij}(n) \bigr) \ln(n) 
	= \delta_{ij}^2, \qquad i, j \in W.
\end{equation}
Obviously, the matrix $\Delta \in [0, \infty)^{W \times W}$ is symmetric and has zero diagonal.
Suppose further that $\Delta$ is conditionally negative definite, i.e., we have $a^\top \Delta a < 0$ for every non-zero vector $a \in \reals^W$ such that $\sum_{j \in W} a_j = 0$.
[Note that the weak inequality $a^\top \Delta a \le 0$ automatically holds for such $a$ and for the limit $\Delta$ in~\eqref{eq:Delta}.]
For $J\subseteq W$ with $|J| \ge 2$ and for $s\in J$ let $\Psi_{J,s}$ 
be the positive definite, $|J\setminus s|$-square symmetric matrix with elements
\begin{equation} \label{eq:Psi_mat}
	\bigl(\Psi_{J,s}(\Delta)\bigr)_{i,j}
	= 2(\delta_{si}^2+\delta_{sj}^2-\delta_{ij}^2),
	\qquad i, j \in J \setminus s.
\end{equation}
The $|W|$-variate Hüsler--Reiss max-stable distribution with unit-Fréchet margins and parameter matrix $\Delta$ is 
\begin{equation}
\label{eq:HRDelta}
	H_\Delta(x) 
	= \exp \left\{ \sum_{j=1}^{|W|}(-1)^j \sum_{J\subseteq W : |J|=j} h_{\Delta, J}(x_J) \right\}, \qquad x \in (0, \infty)^W,
\end{equation}
with $h_{\Delta, J}(x_J) = 1/x_w$ if $J = \{w\}$, while, if $|J| \ge 2$,
\[
	h_{\Delta, J}(x_J) =
	\int_{\ln(x_s)}^\infty 
	\P\left[ 
		\forall w \in J \setminus s, \, Y_{sw} > \ln(x_w)-z+2\delta_{sw}^2 
	\right]
	e^{-z} \, \diff z
\]
where $s$ can be any element of $J$ and where $Y_s = (Y_{sw}, w \in J \setminus s)$ is a multivariate normal random vector with zero mean vector and covariance matrix $\Psi_{J,s}(\Delta)$ in~\eqref{eq:Psi_mat}.

A shorter expression for $H_\Delta$ is given in \citet[Remark~2.5]{nikoloulopoulos2009extreme}, later confirmed as the finite-dimensional distributions of max-stable Gaussian and Brown-Resnick processes in \citet{genton2011on} and \citet{huser2013composite} respectively:
\[
	H_\Delta(x) = \exp \left\{
		- \sum_{s \in W} \frac{1}{x_s} \pnorm_{|W|-1} \left(
			2 \delta_{vs}^2 + \ln (x_v/x_s), v \in W \setminus s;
			\Psi_{W,s}(\Delta)
		\right)
	\right\},
	\qquad x \in (0, \infty)^W,
\]
with $\pnorm_{d}(\,\cdot\,;\Sigma)$ the $d$-variate normal cdf with covariance matrix $\Sigma$. 
The stdf is thus
\[
	\ell_\Delta(y) = \sum_{s \in W} y_s \pnorm_{|W|-1} \left(
		2 \delta_{vs}^2 + \ln (y_s/y_v), v \in W \setminus s;	\Psi_{W,s}(\Delta) 
	\right),
	\qquad y \in (0, \infty)^W.
\]
If $|W| = 2$ and if the off-diagonal element of $\Delta$ is $\delta^2 \in (0, \infty)$, say, we have $\Psi_{W,s}(\Delta) = 4 \delta^2 = (2 \delta)^2$ and the  stdf $\ell_\Delta$ indeed simplifies to $\ell_\delta$ in~\eqref{eq:stdf:HR:2}.

\bgroup
\color{darkgray}
\egroup


\bgroup
\egroup

\subsection{Hüsler--Reiss limits and extremal graphical models}

Recall from Example~\ref{exampleHR} that for a max-stable Hüsler--Reiss vector $X = (X_1,\ldots,X_d)$, the limit of $X/t$ given $X_1 = t$ as $t \to \infty$ is multivariate log-normal. Now we take that limit as starting point for the tails of the clique vectors of a Markov random field on a block graph.
	
\begin{assum}[Markov block graph with clique-wise Hüsler--Reiss limits]
		\label{ass:HRMf}
		Let $X$ be a Markov random field with respect to the (connected) block graph $\G = (V, E)$ with (maximal) cliques $\mathcal{C}$. Suppose the margins of $X$ satisfy $t \P(X_v > t) \to 1$ as $t \to \infty$ for all $v \in V$.
		For every clique $C \in \mathcal{C}$, let $\Delta_C = (\delta_{ij}^2)_{i,j \in C}$ be the parameter matrix of a $|C|$-variate max-stable Hüsler--Reiss distribution, i.e., $\Delta_C \in [0, \infty)^{C \times C}$ is symmetric, conditionally negative definite, and has zero diagonal.
	 For every $C\in \mathcal{C}$ let $X_C$ satisfy Assumption~\ref{ass:nu} where $\nu_{C,u}$ is the limit in Example~\ref{exampleHR}, i.e., a $|C\setminus u|$-variate log-normal distribution with mean vector 
	 \[
	 \mu_{C,u}=-2(\delta_{ui}^2, i\in C\setminus u)
	 \] 
	 and covariance matrix
	 \[
	 (\Psi_{C,u})_{ij}=2(\delta_{ui}^2+\delta_{uj}^2-\delta_{ij}^2),
	  \qquad i,j\in C\setminus u.
	 \]
	\end{assum}

Because for any $u\in C$ the random vector $X_C$ satisfies the limit in Assumption~\ref{ass:nu} by Corollary~1 and Theorem~2 in \citet{segers2020one} it follows that it is in the max-stable domain of attraction of a Hüsler--Reiss distribution with parameter matrix $\Delta_C$. 

We apply Theorem~\ref{prop_main} to study the limit of $X/t\mid X_u=t$ for some $u\in V$, i.e., the conditional distribution of the field given that it is large at a particular node.
Write $\Delta = (\Delta_C, C \in \mathcal{C})$ and consider the matrix $P(\Delta) = \bigl(p_{ij}(\Delta)\bigr)_{i,j \in V}$ of path sums
\begin{equation}
\label{eq:pij}
	p_{ij}(\Delta) := \sum_{e\in \pth{i}{j}}\delta^2_{e},
\end{equation}
where $\pth{i}{j}$ is the collection of edges on the unique shortest path from $i$ to $j$ and where $\delta^2_{e}$ is to be read off from the matrix $\Delta_C$ for the unique clique $C$ containing the two nodes connected by $e$; by convention, $p_{ii}(\Delta) = 0$ for all $i \in V$, being the sum over the empty set $\pth{i}{i}=\varnothing$.
Let $\mathcal{N}_r(\mu, \Sigma)$ denote the $r$-variate normal distribution with mean vector $\mu$ and covariance matrix $\Sigma$.

\begin{prop}[Logarithm of the limiting field] \label{prop:log-excess}
	Under Assumption~\ref{ass:HRMf}, we have, for each $u \in V$ and as $t \to \infty$, 
	\begin{equation*} 
	\bigl( \ln (X_v/t), \, v\in V \setminus u \mid X_u = t \bigr)
	\dto 
	\mathcal{N}_{|V\setminus u|} \bigl( 
		\mu_{u}(\Delta), \Sigma_{u}(\Delta)
	\bigr)
	\end{equation*}
	with mean vector and covariance matrix written in terms of $p_{ij} = p_{ij}(\Delta)$ in~\eqref{eq:pij} by
\begin{align} 
\label{eqn:muVu}
	\bigl(\mu_{u}(\Delta)\bigr)_i
	&=
	-2p_{ui}, &i \in V \setminus u,\\
	\label{eqn:sigmaVu}
	\bigl(\Sigma_{u}(\Delta)\bigr)_{i,j}
&=
	2(p_{ui}+p_{uj}-p_{ij}), &i, j \in V \setminus u,
\end{align}
and in particular $(\Sigma_u(\Delta))_{i,i} = 4p_{ui}$ for $i \in V \setminus u$.
The matrix $\Sigma_u(\Delta)$ is positive definite and the matrix $P(\Delta)$ is conditionally negative definite.
\end{prop}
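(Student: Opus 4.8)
The plan is to apply Theorem~\ref{prop_main} directly and then identify the resulting product of log-normal increments as a single multivariate normal vector in the logarithmic scale. First I would note that under Assumption~\ref{ass:HRMf}, each clique increment vector $Z_{s, C\setminus s}$ in Definition~\ref{def:increments} is, by Example~\ref{exampleHR}, multivariate log-normal, so that $\ln Z_{s, C\setminus s}$ is Gaussian with mean $\mu_{C,s}(\Delta_C) = -2(\delta_{si}^2, i \in C\setminus s)$ and covariance $\Psi_{C,s}(\Delta_C)$ given by~\eqref{eq:Psi_mat}. By Theorem~\ref{prop_main}, the limit variable is $A_{uv} = \prod_{e \in \pth{u}{v}} Z_e$, so $\ln A_{uv} = \sum_{e \in \pth{u}{v}} \ln Z_e$ is a sum of these Gaussian increments. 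Since the edges on a single path $\pth{u}{v}$ lie in distinct blocks (Remark~\ref{rmk:blockvstree}), and since blocks are mutually independent in Definition~\ref{def:increments}, the vector $(\ln A_{uv}, v \in V\setminus u)$ is a linear image of the jointly Gaussian collection of all increment logarithms, hence itself multivariate normal. This establishes the limit law's form; it remains only to compute its mean vector and covariance matrix.

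For the mean, linearity gives $(\mu_u(\Delta))_i = \E[\ln A_{ui}] = \sum_{e \in \pth{u}{i}} \E[\ln Z_e]$. Reading each $\E[\ln Z_e]$ off the clique-wise mean $\mu_{C,s}$, the edge $e = (s,i)$ on the path contributes $-2\delta_{si}^2 = -2\delta_e^2$, so the sum telescopes into $-2 p_{ui}$ by the definition~\eqref{eq:pij} of the path sum, giving~\eqref{eqn:muVu}. For the covariance, I would compute $(\Sigma_u(\Delta))_{i,j} = \cov(\ln A_{ui}, \ln A_{uj})$. The shortest paths $\pth{u}{i}$ and $\pth{u}{j}$ share a common initial segment and then diverge; because increments in different blocks are independent, only the shared edges contribute to the covariance, and within the common block the within-clique entries of $\Psi_{C,s}$ must be used. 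The cleanest route is to verify the claimed bilinear formula $2(p_{ui}+p_{uj}-p_{ij})$ directly: the diagonal case $i=j$ gives $(\Sigma_u)_{i,i} = \var(\ln A_{ui}) = \sum_{e\in\pth{u}{i}}\var(\ln Z_e) = \sum 4\delta_e^2 = 4p_{ui}$ (matching $2(p_{ui}+p_{ui}-0)$), and the off-diagonal case follows from the polarization identity together with the fact that $p_{ui}+p_{uj}-p_{ij}$ counts exactly twice the path sum over the shared segment $\pth{u}{w}$ to the branch point $w$. Care is needed when the branch point falls strictly inside a clique, since then within-clique covariances (the full $\Psi_{C,s}$, not just its diagonal) enter; I would confirm that the Hüsler--Reiss structure~\eqref{eq:Psi_mat} is precisely what makes the path-sum formula hold in that case.

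The final positive-definiteness claims are where the main obstacle lies. Positive definiteness of $\Sigma_u(\Delta)$ follows because it is the covariance matrix of a non-degenerate Gaussian limit, provided one argues non-degeneracy; more structurally, $\Sigma_u(\Delta)$ equals $\Psi_{V,u}(P(\Delta))$ in the notation of~\eqref{eq:Psi_mat} with $\Delta$ replaced by the path-sum matrix $P(\Delta)$, so its positive definiteness is equivalent to the conditional negative definiteness of $P(\Delta)$. Thus the two assertions are two faces of the same fact, and I would prove conditional negative definiteness of $P(\Delta)$ as the crux. The natural strategy is to show that $P(\Delta)$ is itself a valid Hüsler--Reiss parameter matrix built from the $\Delta_C$: one can exhibit it as arising from a variogram/covariance construction along the block-graph paths, or argue by induction on the number of cliques using the single-node separator property of block graphs, where at each step one glues a new clique's conditionally-negative-definite $\Delta_C$ onto the existing structure through a shared separator node. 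I expect the gluing argument to require the key block-graph fact that minimal clique separators are single nodes, ensuring the path sums combine additively without overlap, which is exactly what preserves conditional negative definiteness under the recursion.
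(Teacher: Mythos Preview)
Your proposal is essentially correct and tracks the paper's approach closely for the limit identification and the mean/covariance computation, including correctly flagging the delicate case where the branch point lies strictly inside a clique (the paper handles this by explicit case analysis, verifying that the within-clique covariance $2(\delta_{wk}^2+\delta_{wl}^2-\delta_{kl}^2)$ from~\eqref{eq:Psi_mat} is exactly what closes the formula $2(p_{ui}+p_{uj}-p_{ij})$). One small omission: before invoking Theorem~\ref{prop_main} you should note that Assumption~\ref{ass:zero} is vacuously satisfied, since the log-normal clique limits have no atom at zero.

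The substantive divergence is in the positive-definiteness step. You propose to prove conditional negative definiteness of $P(\Delta)$ directly by an inductive gluing argument over cliques and then deduce positive definiteness of $\Sigma_u(\Delta) = \Psi_{V,u}(P(\Delta))$. The paper goes the other way: it shows $\Sigma_u(\Delta)$ is positive definite by writing $(\ln A_{uv})_v$ as an explicit invertible linear image of the block-independent clique log-increments, yielding $\Sigma_u(\Delta) = M_u \, \Sigma_u^Z(\Delta) \, M_u^\top$ with $M_u$ invertible and $\Sigma_u^Z(\Delta)$ block-diagonal with positive-definite blocks $\Psi_{C,s}(\Delta_C)$; conditional negative definiteness of $P(\Delta)$ then falls out of a two-line algebraic manipulation of $a^\top \Sigma_u(\Delta) a > 0$ for vectors with $\sum_i a_i = 0$. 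Your inductive route would also work, but the paper's linear-algebra shortcut is quicker and, as a by-product, gives the explicit precision-matrix formula~\eqref{eqn:ThetauDelta} that is reused in the proof of Proposition~\ref{prop:extr_gm}.
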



\begin{example}
\label{ex:simgraph}
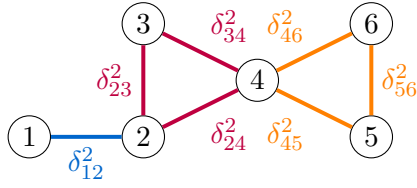
\begin{figure}
	\centering
	\begin{tikzpicture} 
	\node[hollow node] (0) at (0,0) {$1$};
	\node[hollow node] (1) at (1.5,0)  {$2$};
	\node[hollow node] (2) at (1.5,1.5)  {$3$};
	\node[hollow node] (3) at (3, 0.75)  {$4$};
	\node[hollow node] (6) at (4.5,0)  {$5$};
	\node[hollow node] (4) at (4.5,1.5)  {$6$};
	\path (0) edge[color=bbl, line width=1.5] (1)  ;
	\path (1) edge[color=purple, line width=1.5] (2);
	\path (2) edge[color=purple, line width=1.5] (3)  ;
	\path (3) edge[color=orange, line width=1.5] (4);
	\path (3) edge[color=purple, line width=1.5] (1);
	\path (3) edge[color=orange, line width=1.5] (6);
	\path (4) edge[color=orange, line width=1.5] (6);
	\path (0) -- (1) node [midway,auto=right, color=bbl] {$\delta_{12}^2$};
	\path (1) -- (2) node [midway,auto=left, color=purple] {$\delta_{23}^2$};
	\path (3) -- (2) node [midway,auto=right, color=purple] {$\delta_{34}^2$};
	\path (1) -- (3) node [midway,auto=right, color=purple] {$\delta_{24}^2$};
	\path (3) -- (6) node [midway,auto=right, color=orange] {$\delta_{45}^2$};
	\path (3) -- (4) node [midway,auto=left, color=orange] {$\delta_{46}^2$};
	\path (4) -- (6) node [midway,auto=left, color=orange] {$\delta_{56}^2$};
	\end{tikzpicture}
	\caption{The random vector $X$ is six-variate and it is Markov with respect to the graph which contains three cliques, $C_1=\{1,2\}, C_2=\{2,3,4\}$ and $C_3=\{4,5,6\}$. According to Example~\ref{exampleHR} for $C_1$ we have a limiting probability measure $\nu_{C_1, \cdot}$ which depends only on the parameter $\textcolor{bbl}{\delta_{12}}$; for $C_2$ we have  $\nu_{C_2, \cdot}$ which depends only on the parameters $\textcolor{purple}{\delta_{23}^2, \delta_{24}^2, \delta_{34}^2}$ and for $C_3$ we have  $\nu_{C_3, \cdot}$ which depends only on the parameters $\textcolor{orange}{\delta_{45}^2, \delta_{46}^2, \delta_{56}^2}$. }
	\label{fig:simgraph}
\end{figure}
Consider a Markov field with respect to the block graph in Figure~\ref{fig:simgraph}. 
The graph has three cliques, to which correspond three Hüsler--Reiss limits with  parameter matrices respectively
\begin{equation*}
\textcolor{bbl}{\Delta_1}=\begin{bmatrix}0& \textcolor{bbl}{\delta_{12}^2}\\
&0\end{bmatrix},\quad
\textcolor{purple}{\Delta_2}=\begin{bmatrix}0&\textcolor{purple}{\delta_{23}^2}
&\textcolor{purple}{\delta_{24}^2}\\
&0&\textcolor{purple}{\delta_{34}^2}\\
&
& 0\end{bmatrix}, \quad
\textcolor{orange}{  	\Delta_3}=\begin{bmatrix}0&\textcolor{orange}{\delta_{45}^2}&
\textcolor{orange}{\delta_{46}^2}\\
&0&\textcolor{orange}{\delta_{56}^2}\\
&
& 0\end{bmatrix}.
\end{equation*}
If a high threshold is exceeded at node~$u = 1$, the limiting $5$-variate normal distribution in Proposition~\ref{prop:log-excess} has means $(\mu_1(\Delta))_i$ and variances $(\Sigma_1(\Delta))_{ii}$ proportional to the path sums $p_{1i} = \sum_{e \in \pth{1}{i}} \delta_{e}^2$ for $i \in \{2,\ldots,5\}$, while the off-diagonal entries of the covariance matrix are given by
\begin{align*}
	\bigl( \Sigma_1(\Delta) \bigr)_{2,j} &= 4 \textcolor{bbl}{\delta_{12}^2}, && j \in \{3, \ldots, 6\}, \\
	\bigl( \Sigma_1(\Delta) \bigr)_{3,j} &= 4 \left(\textcolor{bbl}{\delta_{12}^2} + \tfrac{1}{2} (\textcolor{purple}{\delta_{23}^2 + \delta_{24}^2 - \delta_{34}^2})\right), 	
	&& j \in \{4, 5, 6\}, \\
	\bigl( \Sigma_1(\Delta) \bigr)_{4,j} &= 4 (\textcolor{bbl}{\delta_{12}^2} + \textcolor{purple}{\delta_{24}^2}), && j \in \{5, 6\}, \\	
	\bigl( \Sigma_1(\Delta) \bigr)_{5,6} &= 4 \left(\textcolor{bbl}{\delta_{12}^2} + \textcolor{purple}{\delta_{24}^2} + \tfrac{1}{2}(\textcolor{orange}{\delta_{45}^2 + \delta_{46}^2 - \delta_{56}^2})\right).	
\end{align*}
The dependence within blocks is visible in the covariances at entries $(3, j)$ for $j \in \{4, 5, 6\}$ and the one at entry $(5, 6)$. \hfill$\diamondsuit$
\end{example}



The exact distribution of the Markov field with clique-wise Hüsler--Reiss limits in Assumption~\ref{ass:HRMf} is 
in the max-domain of attraction of such a distribution.

\begin{prop}[Max-domain of attraction
	] \label{prop_HRattractor}
	The Markov random field $X$ in Assumption~\ref{ass:HRMf} is in the max-domain of attraction of the Hüsler--Reiss max-stable distribution~\eqref{eq:HRDelta} with unit-Fréchet margins and parameter matrix $P(\Delta)$ in~\eqref{eq:pij}, that is,
	\begin{equation*}
		\lim_{n \to \infty}
		\bigl( \P\left( \forall v \in V : X_v \le n x_v \right) \bigr)^n
		= H_{P(\Delta)}(x),
		\qquad x \in (0, \infty)^V.
	\end{equation*}
\end{prop}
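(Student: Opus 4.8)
The plan is to identify the abstract max-stable limit $G$ delivered by Corollary~\ref{cor:max-stable} with the Hüsler--Reiss distribution $H_{P(\Delta)}$, the link being the common conditional limit $A_u$ and its log-normality. First I would check that the hypotheses of Corollary~\ref{cor:max-stable} are met under Assumption~\ref{ass:HRMf}: the margin condition $t\P(X_v>t)\to 1$ is assumed outright, Assumption~\ref{ass:nu} holds by construction with the log-normal limits $\nu_{C,u}$ of Example~\ref{exampleHR}, and Assumption~\ref{ass:zero} is vacuous because every $\nu_{C,u}$ is supported on $(0,\infty)^{C\setminus u}$ and so has no atom at the origin, i.e.\ no edge satisfies $\nu_{C',a}^b(\{0\})>0$. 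Hence Theorem~\ref{prop_main} and Corollary~\ref{cor:max-stable} apply and $X$ lies in the max-domain of attraction of a simple max-stable $G$ with $(\P(X_v\le nx_v, v\in V))^n \to G(x) = \exp[-\mu(\{y:\exists v\in V, y_v>x_v\})]$, the exponent measure $\mu$ being governed by the law of $A_u$ for an arbitrary fixed source node $u$.

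Next I would pin down the law of $A_u$. Example~\ref{exampleHR} verifies $\E[Z_e]=1$ for each clique-wise increment, so $\E[A_{uv}]=\prod_{e\in\pth{u}{v}}\E[Z_e]=1$ and Corollary~\ref{cor:max-stable:stdf} yields the stable tail dependence function in the clean form $\ell(x)=\E[\max_{v\in V} x_v A_{uv}]$ with the convention $A_{uu}=1$. By Proposition~\ref{prop:log-excess}, the vector $(\ln A_{uv})_{v\in V\setminus u}$ is multivariate normal with mean entries $-2p_{ui}$ and covariance entries $2(p_{ui}+p_{uj}-p_{ij})$. The decisive observation is that, writing $(P(\Delta))_{ij}=p_{ij}$ as in~\eqref{eq:pij}, these are precisely the mean vector with entries $-2(P(\Delta))_{ui}$ and the covariance matrix $\Psi_{V,u}(P(\Delta))$ attached by the Hüsler--Reiss construction~\eqref{eq:Psi_mat} to the parameter matrix $P(\Delta)$. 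Proposition~\ref{prop:log-excess} also certifies that $P(\Delta)$ is symmetric with zero diagonal and conditionally negative definite, so that $H_{P(\Delta)}$ in~\eqref{eq:HRDelta} is a bona fide Hüsler--Reiss distribution and the target of the convergence is well defined.

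It then remains to conclude that $G=H_{P(\Delta)}$, for which two equivalent routes are available. The first invokes the characterization of the Hüsler--Reiss domain of attraction in \citet{engelke2014estimation}: the convergence of the conditional log-increments to the normal law of Proposition~\ref{prop:log-excess}, just matched to the parameters of $P(\Delta)$, is exactly the necessary and sufficient condition for membership in the domain of attraction of $H_{P(\Delta)}$. The second route is self-contained: apply Example~\ref{exampleHR} to a genuine $|V|$-variate Hüsler--Reiss vector with parameter matrix $P(\Delta)$ to see that its conditional limit has the same normal law as $\ln A_u$; since, by~\eqref{eq:A2stdf}, the stable tail dependence function of a simple max-stable law is recovered from this conditional limit through $\ell(x)=\E[\max_{v\in V} x_v A_{uv}]$, the two tail dependence functions coincide and therefore so do the distributions. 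I expect the only genuinely technical point to be the explicit Gaussian integration behind this last formula---evaluating $\E[\max_{v\in V} x_v A_{uv}]$ for a log-normal $A_u$ and recognizing the outcome as the Hüsler--Reiss stdf $\ell_{P(\Delta)}$---which is precisely the computation underlying the cited characterization; I would either quote it directly or reproduce the integral, taking care of the boundary behaviour as some $x_v\to 0$.
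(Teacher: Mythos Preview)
Your proposal is correct and follows essentially the same route as the paper: verify the hypotheses of Theorem~\ref{prop_main} and Corollary~\ref{cor:max-stable:stdf} (the log-normal clique limits have no mass at zero, and $\E[Z_e]=1$), obtain $\ell(x)=\E[\max_{v\in V} x_v A_{uv}]$, and then use the log-normality of $A_u$ from Proposition~\ref{prop:log-excess} to identify $\ell$ with the Hüsler--Reiss stdf for parameter matrix $P(\Delta)$. The only minor difference is in the final identification step: the paper carries out the explicit Gaussian integration you flag as the ``genuinely technical point'' via the maximum--minimums identity, writing $\E[\max_v x_v^{-1}A_{uv}]$ as an alternating sum of $\E[\min_{v\in W} x_v^{-1}A_{uv}]$ over subsets $W\subseteq V$, then expressing each minimum expectation as $\int_{\ln x_u}^\infty \P[\forall v\in W\setminus u:\ln A_{uv}>\ln x_v - z]\,e^{-z}\,\diff z$ and matching term-by-term with~\eqref{eq:HRDelta}; your two alternative routes (invoking the domain-of-attraction characterization of \citet{engelke2014estimation}, or comparing with Example~\ref{exampleHR} applied to a genuine Hüsler--Reiss vector with parameter $P(\Delta)$) sidestep this computation and are arguably cleaner, at the cost of relying on an external result.
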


Recall from Section~\ref{sec:evt-intro} that because $X$ in Proposition~\ref{prop_HRattractor} belongs to the domain of attraction of the max-stable distribution $H_{P(\Delta)}$, the asymptotic distribution of the vector of high-threshold excesses is a multivariate Pareto distribution determined by $H_{P(\Delta)}$ via \eqref{eqn:pareto_lim}. The latter is the distribution of the random vector $Y$ in the next proposition and is called a Hüsler--Reiss Pareto distribution 
in \citet{engelke2020graphical}.
The distribution of $Y$ turns out to be an extremal graphical model in the sense of \citet[Definitions~1 and~2]{engelke2020graphical}. We recall this notion here. Let $Y$ be a multivariate Pareto random vector in~\eqref{eqn:pareto} and for $u \in V$, let $Y^{(u)}$ be a random vector equal in distribution to $Y \mid Y_u > 1$.
Then $Y$ is an extremal graphical model with respect to a graph $\G=(V,E)$ if we have conditional independence $Y^{(u)}_i\indep Y^{(u)}_j \mid Y^{(u)}_{V\setminus \{u,i,j\}}$ for all $i,j\in V\setminus u$ such that $(i,j) \notin E$.  

\begin{prop}[Attraction to extremal graphical model] \label{prop:extr_gm} 
	The Markov block graph $X$ in Assumption~\ref{ass:HRMf} satisfies the weak convergence relation~\eqref{eqn:pareto} with $Y$ distributed as in~\eqref{eqn:pareto_lim} for $G = H_{P(\Delta)}$, the limit in Proposition~\ref{prop_HRattractor}.
	This $Y$ is an extremal graphical model with respect to $\G$ in the sense of \citet[Definition~2]{engelke2020graphical}.
\end{prop}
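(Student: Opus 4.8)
The plan is to split the statement into its two claims. The weak convergence in~\eqref{eqn:pareto} is immediate: by Proposition~\ref{prop_HRattractor}, $X$ lies in the max-domain of attraction of $G = H_{P(\Delta)}$, and the general domain-of-attraction theory recalled in Section~\ref{sec:evt-intro} — specifically the passage from~\eqref{eqn:FtG} to~\eqref{eqn:pareto} — then yields that $t^{-1}X \mid \max_{v\in V} X_v > t$ converges weakly to the multivariate Pareto vector $Y$ whose law is given by~\eqref{eqn:pareto_lim} with $G = H_{P(\Delta)}$. Because $G$ is Hüsler--Reiss with parameter matrix $P(\Delta)$, the limit $Y$ is a Hüsler--Reiss Pareto distribution with that same parameter matrix, so the whole problem reduces to verifying that this particular Hüsler--Reiss Pareto distribution is an extremal graphical model with respect to $\G$.

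For the second claim I would invoke the characterization of extremal conditional independence for Hüsler--Reiss Pareto distributions in \citet{engelke2020graphical}. For a fixed node $u$, they express the conditional independence $Y^{(u)}_i \indep Y^{(u)}_j \mid Y^{(u)}_{V\setminus\{u,i,j\}}$ (for $i,j \in V\setminus u$) as the vanishing of the $(i,j)$ entry of the precision matrix $\Theta^{(u)} = (\Sigma^{(u)})^{-1}$, where $\Sigma^{(u)}$ is built from the Hüsler--Reiss variogram via $\Sigma^{(u)}_{ij} = \tfrac12(\Gamma_{ui}+\Gamma_{uj}-\Gamma_{ij})$. Under the matching between the parametrization $\Delta$ and the variogram $\Gamma$, a direct comparison with~\eqref{eqn:sigmaVu} shows that this $\Sigma^{(u)}$ coincides with the covariance matrix $\Sigma_u(\Delta)$ of the Gaussian log-limit field identified in Proposition~\ref{prop:log-excess}. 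Since the extremal graph is the support of $\Theta^{(u)}$ and the adjacency of a pair $i,j \neq u$ is unaffected by deleting $u$, it suffices to prove that $(\Sigma_u(\Delta)^{-1})_{ij} = 0$ whenever $(i,j) \notin E$, for every choice of $u \notin \{i,j\}$; equivalently, that the Gaussian vector $L := (\ln A_{uv}, v \in V \setminus u)$ with mean $\mu_u(\Delta)$ and covariance $\Sigma_u(\Delta)$ is a Gaussian graphical model with respect to the induced block graph $\G_{V\setminus u}$.

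To establish this Markov property I would use the explicit additive representation $\ln A_{uv} = \sum_{e \in \pth{u}{v}} \ln Z_e$ from Theorem~\ref{prop_main}, in which the increments $\ln Z_e$ are jointly Gaussian, correlated only within a common clique and independent across cliques (Definition~\ref{def:increments}). Thus $L$ is obtained by accumulating block-wise independent Gaussian increments along the shortest-path tree rooted at $u$. Because every minimal separator of a block graph is a single node and the cliques can be ordered to respect the running-intersection property, conditioning on the value at a separator node $s$ decouples the increments lying on the two sides of $s$: the sub-fields on either side are built from disjoint, independent families of increments, shifted only by the common offset $L_s$. This singleton-separator decomposition shows that the joint density of $L$ factorizes over the cliques of $\G_{V\setminus u}$, so by the Gaussian Hammersley--Clifford theorem its precision matrix is supported on the edge set, i.e.\ $(\Sigma_u(\Delta)^{-1})_{ij}=0$ for non-adjacent $i,j$. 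Finally, the pairwise statements so obtained upgrade to the global Markov property, hence to the extremal graphical model property in the sense of \citet[Definition~2]{engelke2020graphical}, through their equivalence of Markov properties for Pareto distributions with positive continuous density.

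The main obstacle I anticipate is the Markov step of the previous paragraph: turning the informal picture of ``independent increments across a singleton separator'' into a clean verification that conditioning on \emph{all} remaining coordinates, not merely on the separating node, still leaves $L_i$ and $L_j$ conditionally independent for every non-adjacent pair. The within-clique correlations of the increments must be shown to fill in exactly the clique edges and nothing more, so that the precision pattern is neither too sparse (which would wrongly separate clique-mates) nor too dense (which would violate the claim). Carrying this out rigorously is most safely done through the clique factorization of the Gaussian density rather than by a direct inversion of $\Sigma_u(\Delta)$, whose entries~\eqref{eqn:sigmaVu} carry the nontrivial path-sum structure $p_{ui}+p_{uj}-p_{ij}$.
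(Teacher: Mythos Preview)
Your overall strategy is sound and would succeed, but the paper reaches the same conclusion by a more direct, purely algebraic route. Both proofs reduce the extremal graphical model claim to showing that, for any non-adjacent pair $i,j$ and some (in fact any) $u \notin \{i,j\}$, the precision matrix $\Theta_u(\Delta) = \Sigma_u(\Delta)^{-1}$ has a zero at position $(i,j)$. Where you propose to obtain this via a clique factorization of the density of $L = (\ln A_{uv})_v$ and Hammersley--Clifford, the paper instead exploits the explicit decomposition
\[
\Theta_u(\Delta) = (M_u^{-1})^\top \, \Theta_u^Z(\Delta) \, M_u^{-1},
\]
which was already established in the proof of Proposition~\ref{prop:log-excess} when showing positive definiteness of $\Sigma_u(\Delta)$. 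Here $\Theta_u^Z(\Delta)$ is the block-diagonal precision matrix of the increment vector $(\ln Z_e)_e$ and $M_u^{-1}$ is the sparse differencing matrix with $(M_u^{-1})_{ai} \ne 0$ only if $a = i$ or $(i,a) \in E_u$. The $(i,j)$ entry then becomes a double sum over $a,b$ which can be non-zero only if $a$ and $b$ lie in a common clique; but for non-adjacent $i,j$ this would force a cycle through $u,i,a,b,j$ outside a single clique, impossible in a block graph. This argument is essentially a two-line computation once the matrix identity is in hand, whereas your density-factorization approach, while conceptually transparent, requires you to carry out the change of variables $L \mapsto M_u^{-1}L$ and verify that each resulting factor depends only on the variables within one clique (which it does, since $(M_u^{-1}L)_v = L_v - L_{\mathrm{parent}(v)}$). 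Two minor simplifications of your plan: by Proposition~3 in \citet{engelke2020graphical} it suffices to find \emph{one} $u$ with $(\Theta_u)_{ij}=0$, not all of them; and since Definition~2 there is already the pairwise extremal Markov property, the upgrade from pairwise to global that you mention at the end is not needed.
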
 

Proposition~\ref{prop:extr_gm} leads to the elegant result that the graphical model $X$ obtained by endowing every clique $C$ of a block graph $\G$ by a limit based on the Hüsler--Reiss max-stable distribution with parameter matrix $\Delta_C$ is in the Pareto domain of attraction of a Hüsler--Reiss Pareto random vector $Y$ which is itself an extremal graphical model with respect to the same graph $\G$ and with, on every clique $C$, a Hüsler--Reiss Pareto distribution with the same parameter matrix $\Delta_C$.
In other words, the Pareto limit of a graphical model constructed clique-wise by distributions with Hüsler--Reiss limits is an extremal graphical model constructed clique-wise by Hüsler--Reiss Pareto distributions.

Proposition~\ref{prop:extr_gm} also sheds new light on Proposition~4 in \citet{engelke2020graphical}, where the existence and uniqueness of a Hüsler--Reiss extremal graphical model was established given the Hüsler--Reiss distributions on the cliques of a block graph. In our construction, the solution is explicit and turns out to have the simple and elegant form in terms of the path sums $p_{ij}(\Delta)$ in~\eqref{eq:pij}.

\subsection{Latent variables and parameter identifiability}

In \citet*{asenova2021inference} a criterion was presented for checking whether the parameters of the Hüsler--Reiss distribution are identifiable if for some of the nodes $v \in V$ the variables $X_v$ are unobservable (latent).
The issue was illustrated for river networks when the water level or another variable of interest is not observed at some splits or junctions. 
For trees, a necessary and sufficient identifiability criterion was that every node with a latent variable should have degree at least three.

For block graphs, a similar condition turns out to hold. The degree of a node $v \in V$, i.e., the number of neighbours, is now replaced by its \emph{clique degree}, notation $\cd(v)$, defined as the number of cliques containing that node.

Let the setting be the same as in Proposition~\ref{prop_HRattractor} and let $H_{P(\Delta)}(x)$ be the $|V|$-variate max-stable Hüsler--Reiss distribution with parameter matrix $P(\Delta)$ in~\eqref{eq:pij}.
Let the (non-empty) set of nodes with observable variables be $U \subset V$, so that $\bar{U} = V \setminus U$ is the set of nodes with latent variables.
As the max-stable Hüsler--Reiss family is stable under taking marginals \citep[Example~7]{engelke2020graphical}, the vector $X_U = (X_v, v \in U)$ is in the max-domain of attraction of the $|U|$-variate max-stable Hüsler--Reiss distribution with parameter matrix $P(\Delta)_U = (p_{ij}(\Delta))_{i,j \in U}$.
If $\bar{U}$ is non-empty, $U$ is a proper subset of $V$, and the question is whether we can reconstruct the whole matrix $P(\Delta)$ given only the sub-matrix $P(\Delta)_U$ and the graph $\G$.
Note that the entries in $P(\Delta)_U$ are the path sums between nodes carrying observable variables only.
The question is whether we can find the other path sums too, that is, those between nodes one or two of which carry latent variables.


\begin{prop}[Identifiability] \label{prop:identif_hr}
	Given the block graph $\G = (V, E)$ and node sets $\bar{U} \subset V$ and $U=V\setminus \bar{U}$, the Hüsler--Reiss parameter matrix $P(\Delta)$ in~\eqref{eq:pij} is identifiable from the restricted matrix $P(\Delta)_U = (p_{ij}(\Delta))_{i,j \in U}$ if and only if $\cd(v) \ge 3$ for every $v \in \bar{U}$.
\end{prop}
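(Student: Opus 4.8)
The plan is to recast identifiability as the injectivity of a linear map. Since $p_{ij}(\Delta) = \sum_{e \in \pth{i}{j}} \delta_e^2$, the full matrix $P(\Delta)$ is determined by the edge weights $(\delta_e^2)_{e \in E}$, and conversely each edge weight equals $p_{ij}(\Delta)$ for adjacent nodes $i,j$; hence recovering $P(\Delta)$ is equivalent to recovering all edge weights. For an arbitrary real edge labeling $x = (x_e)_{e \in E}$, write $P_{ij}(x) = \sum_{e \in \pth{i}{j}} x_e$. The map $x \mapsto (P_{ij}(x))_{i,j \in U}$ is linear, so $P(\Delta)$ is identifiable from $P(\Delta)_U$ if and only if this map has trivial kernel. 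I would therefore prove the equivalent statement: the kernel is trivial if and only if $\cd(v) \ge 3$ for every $v \in \bar{U}$.

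For necessity, suppose some $v \in \bar{U}$ has $\cd(v) \le 2$. If $\cd(v) = 1$, set $x_e = 1$ on the edges $(v,w)$ with $w$ in the unique clique containing $v$ and $x_e = 0$ otherwise; if $\cd(v) = 2$ with cliques $C_1, C_2$, set $x_e = +1$ on the edges from $v$ into $C_1$, $x_e = -1$ on those from $v$ into $C_2$, and $0$ elsewhere. Since $v \notin U$ and $v$ is an interior node of a shortest path only when it separates the two endpoints, every observed shortest path either avoids all $v$-incident edges or (in the $\cd(v)=2$ case) crosses $v$ using exactly one $+1$ and one $-1$ edge; either way $P_{ij}(x) = 0$ for all $i,j \in U$, so $x$ is a nonzero kernel element. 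To see these give genuinely indistinguishable valid parameters, I would realize them as perturbations $\delta_e^2 \mapsto \delta_e^2 + c\,x_e$: a direct computation shows that adding $c$ to all $v$-incident edges of a clique changes $a^\top \Delta_C a$ by $-2c\,a_v^2$ for $a$ summing to zero, so conditional negative definiteness (and nonnegativity of entries) is preserved for small $c>0$. Hence $P(\Delta)$ is not identifiable.

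For sufficiency, assume $\cd(v) \ge 3$ for all $v \in \bar{U}$ and let $x$ lie in the kernel, so $P_{ij}(x) = 0$ for all $i,j \in U$. The crucial geometric input is the lemma that for every $v \in \bar{U}$, each connected component (``branch'') of $\G \setminus v$ contains a node of $U$: the node $w$ of a branch that is \emph{farthest} from $v$ must have clique degree one, because any cut vertex (clique degree $\ge 2$) separates $v$ from some strictly farther node; as clique-degree-one nodes cannot be latent under the hypothesis, $w \in U$. Because $\cd(v) \ge 3$ there are at least three branches, hence three anchors $a_1, a_2, a_3 \in U$ in distinct branches. Since $v$ lies on every path between distinct branches and shortest paths in block graphs are unique, $\pth{a_i}{a_j}$ is the disjoint union of $\pth{a_i}{v}$ and $\pth{v}{a_j}$; writing $q_i := P_{a_i v}(x)$ gives $P_{a_i a_j}(x) = q_i + q_j$ and hence the Steiner-point identity $q_i = \tfrac12\bigl(P_{a_i a_j}(x) + P_{a_i a_k}(x) - P_{a_j a_k}(x)\bigr) = 0$. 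Choosing the two other anchors in branches different from the one containing a given $a$, this yields $P_{av}(x) = 0$ for every $a \in U$ and every $v \in \bar{U}$. I would then recover every edge weight: for an edge $e=(a,b)$ with $a,b \in U$, $x_e = P_{ab}(x) = 0$; for $e=(a,v)$ with $a \in U$, $v \in \bar{U}$, $x_{av} = P_{av}(x) = 0$; and for $e=(v,v')$ with both latent, pick an observed anchor $a$ in a branch of $v$ not containing $v'$ (possible as $v$ has at least three branches, each with an observed node), so that $\pth{a}{v'} = \pth{a}{v} \cup \{(v,v')\}$ and hence $x_{vv'} = P_{av'}(x) - P_{av}(x) = 0$. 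Thus $x \equiv 0$.

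The heart of the argument, and the analogue of the degree-three condition for trees, is the branch lemma combined with the three-point identity: three independent branches are precisely what is needed to express the latent path sums $P_{av}(x)$ through observed ones, and the farthest-node argument is what pins $\cd(v) \ge 3$ as the exact threshold. I expect this lemma plus the latent-to-latent bootstrap to be the main obstacle, whereas the necessity direction becomes routine once the kernel viewpoint is adopted, the only delicate point there being that the kernel perturbations stay inside the conditionally negative definite cone, which the identity $a^\top E a = -2c\,a_v^2$ settles.
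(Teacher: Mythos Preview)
Your proof is correct and parallels the paper's argument closely in spirit: both directions hinge on the same geometric facts (cut-vertex structure of block graphs and the three-point Steiner identity), and your explicit $\pm 1$ perturbations in the necessity direction are exactly what the paper constructs. The difference is mainly in packaging. You recast identifiability as injectivity of the linear map $x \mapsto (P_{ij}(x))_{i,j\in U}$ and argue via its kernel, whereas the paper works constructively, exhibiting for each edge $(a,b)$ an explicit formula for $\delta_{ab}^2$ in terms of observed path sums. Your framing makes the logic cleaner and lets you treat the three edge types (both endpoints observed, one latent, both latent) uniformly at the end.

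More substantively, you state and prove the branch lemma---that every component of $\G\setminus v$ contains a node of $U$, via the farthest-node argument---which the paper's ``walk away from $b$ up to the first node $\bar\imath$ in $U$'' uses implicitly without justification. You also take more care in the necessity direction to check that the kernel perturbations can be realised inside the parameter space, via the identity $a^\top(\Delta_C+cE)a - a^\top\Delta_C a = -2c\,a_v^2$ for zero-sum $a$; the paper leaves this point aside. So: same strategy, but your version closes two small gaps the paper glosses over.
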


\begin{example}
\label{ex:identif_hr}
\begin{figure} 
	\centering 
	\begin{tikzpicture} \label{fig:sim_hr_mis}
	\node[hollow node] (a) at (-0.75,1.5)  {$1$};
	\node[hollow node] (b) at (-1.5,0)  {$2$};
	\node[hollow node, color=red, line width=1] (c) at (0,0)  {$3$};
	\node[hollow node] (d) at (0.75,1.5)  {$4$};
	\node[hollow node] (e) at (1.5,0)  {$5$};
	\node[hollow node] (f) at (-0.75, -1.5) {$6$};
	\node[hollow node] (g) at (0.75, -1.5) {$7$};
	\path (a) edge[color=bbl, line width=1.5] (b);
	\path (c) edge[color=bbl, line width=1.5] (a)  ;
	\path (c) edge[color=bbl, line width=1.5] (b);
	\path (c) edge[color=ggr, line width=1.5] (d);
	\path (c) edge[color=ggr, line width=1.5] (e);
	\path (d) edge[color=ggr, line width=1.5] (e);
	\path (c) edge[color=ppn, line width=1.5] (f);
	\path (c) edge[color=ppn, line width=1.5] (g);
	\path (f) edge[color=ppn, line width=1.5] (g);
	\end{tikzpicture}
	\caption{A block graph with three cliques. In the first clique with node set $C_1=\{1,2,3\}$ the parameters are $\textcolor{bbl}{\delta_{12}^2, \delta_{13}^2, \delta_{23}^2}$, in the second clique $C_1=\{3,4,5\}$ the parameters are $\textcolor{ggr}{\delta_{34}^2, \delta_{35}^2, \delta_{45}^2}$, and in the third clique $C_1=\{3,6,7\}$ the parameters are $\textcolor{ppn}{\delta_{36}^2, \delta_{37}^2, \delta_{67}^2}$. These nine parameters determine the Hüsler--Reiss parameter matrix $P(\Delta)$ in~\eqref{eq:pij}. The nine parameters and thus the entire matrix $P(\Delta)$ is identifiable from the submatrix $P(\Delta)_U$ with $U = V \setminus \bar{U}$ for $\bar{U}= \{3\}$ because node $3$ belongs to three different cliques (Proposition~\ref{prop:identif_hr} and Example~\ref{ex:identif_hr}).}
	\label{fig:hr_sim}
\end{figure}
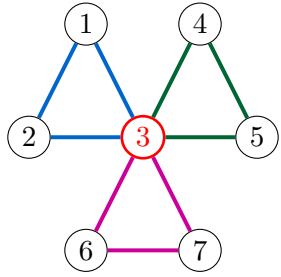

Consider the block graph $\G = (V, E)$ in Figure~\ref{fig:hr_sim} with $V = \{1,\ldots,7\}$ and $\bar{U}=\{3\}$. Therefore $U = V \setminus \{3\}$.
Node $v = 3$ belongs to three different cliques and thus has clique degree $\cd(v) = 3$.
By Proposition~\ref{prop:identif_hr}, all edge parameters $\delta_e^2$ for $e \in E$ can be identified from the path sums $p_{ij}$ for $i, j \in U$.
Indeed, for the edges $e = (a, b) \in \{(1, 2), (4, 5), (6, 7)\}$ this follows from the identity $\delta_{e}^2 = p_{ab}$, while for the edges $\delta_{i,3}^2$ for $i \ne 3$ this follows from a calculation such as
\[
	\delta_{13}^2 = \frac{1}{2} \left( p_{14} + p_{16} - p_{46} \right).
\]
If, however, node $v=1$ would not belong to $U$, then the edge parameters $\delta_{12}^2$ and $\delta_{13}^2$ would not be identifiable from the path sums $p_{ij}$ for $i, j \in V \setminus \{1\}$, since none of these paths contains edges $(1, 2)$ or $(1, 3)$.
\end{example}

\section{Conclusion}
\label{sec:concl}

We have studied the tails of suitably normalized random vectors which satisfy the global Markov property with respect to a block graph. Block graphs are generalizations of trees and this explains why the results presented here are closely related to the ones in \citet{segers2020one} and \citet*{asenova2021inference}. 
The common feature is the existence of a unique shortest path between each pair of nodes. This property is key for our results, although it is not sufficient in itself to explain the multiplicative random walk structure of the limiting field. The latter property is also due to the singleton nature of the minimal clique separators, a property which is no longer present for more general decomposable graphs.
Still, an essential difference between tails of Markov fields with respect to trees on the one hand and more general block graphs on the other hand is that in the latter case, the increments of the random walk in the tail field are dependent within cliques. The regularity assumptions needed for the limit to hold imply multivariate regular variation and thus the model in question is suitable for asymptotically dependent variables.   


We have then focused on a particular random field with respect to a block graph, namely one for which the distribution on each clique satisfies a tail condition based on a Hüsler--Reiss distribution.
We have shown that the logarithm of the limiting field is a normal random vector with mean and covariance matrix that depend on the sums of the edge weights along the unique shortest paths between pairs of nodes.
The same structural pattern emerges in the parameter matrix of the max-stable Hüsler--Reiss distribution to which the Markov field is attracted.
The relation between the original Markov field as an ordinary graphical model on the one hand and Hüsler--Reiss extremal graphical models as in \citet{engelke2020graphical} on the other hand was highlighted.
Due to the path sum structure of the parameter matrix, all edge weights remain identifiable even when variables associated to nodes with clique degree at least three are latent.


An interesting problem would be to identify a minimal requirement on a graph that leads to the multiplicative structure of the tail field of a Markov field that we found for block graphs. 
Another question is which structure replaces the multiplicative random walk form for more general graphs, for instance decomposable graphs, and what this means for specific parametric families. Another research direction could be the study of the tails of the Markov field under assumptions related to hidden regular variation and/or allowing for transitions from non-extreme to extreme regions.   

\section*{Acknowledgements}
Stefka Asenova wishes to thank Antonio Abruzzo for his much appreciated help and suggestions on graphical models.
Johan Segers wishes to thank Sebastian Engelke and Manuel Hentschel for inspiring discussions on Hüsler--Reiss distributions. We are highly grateful to two anonymous Reviewers and the Associate Editor for their suggestions and for pointing us to literature that has enriched our perception on related topics.

\section*{Data availability statement}
There is no data used in this manuscript.

\section*{Conflict of interest}
The authors declare that they have no conflict of interest.

	
\bibliography{bibliography_pr3}
\bibliographystyle{apalike}

\appendix

\section{Proofs for Section~\ref{sec:eot}}
\label{app:eot}

\subsection{Proof of Theorem~\ref{prop_main}}

	The proof follows the lines of the one of Theorem~1 in \citet{segers2020one}. To show~\eqref{eqn:main_res} it is sufficient to show that for a real bounded Lipschitz function $f$, for any fixed $u\in V$ it holds that
	\begin{equation} \label{eq:limitXA}
	\lim_{t \to \infty} \E[f(X_{V\setminus u}/t)\mid X_u=t] = \E[f(A_{u,V\setminus u})],
	\end{equation}
	\citep[Lemma~2.2]{vaart1998asymptotic}.
	Without loss of generality, we assume that $0 \le f(x) \le 1$ and $|f(x)-f(y)|\leq L\sum_{j}|x_j-y_j|$ for some constant $L>0$.

	We proceed by induction on the number of cliques, $m$. When there is only one clique ($m = 1$) the convergence happens by Assumption~\ref{ass:nu} with $s = u$: the distribution of $A_{u,V \setminus u}$ is equal to $\nu_{C,s}$ in the assumption, with $C = V$ and $u = s$.
	
	Assume that there are at least two cliques, $m \ge 2$.
	Let the numbering of the cliques be such that the last clique, $C_m$, is connected to the subgraph induced by $\bigcup_{i=1}^{m-1}C_i$ only through one node, which is the minimal separator between $C_m$ and $\bigcup_{i=1}^{m-1}C_i$. Let $s\in \mathcal{S}$ denote this node and introduce the set 
	\[
		C_{1:m-1}=(C_1 \cup \cdots \cup C_{m-1}) \setminus u.
	\]
	Note that $\{s\} = (C_1 \cup \cdots \cup C_{m-1}) \cap C_m$.
	We need to make a distinction between two cases: $s = u$ or $s \ne u$. The case $s = u$ is the easier one, since then $X_{C_m \setminus u}$ and $X_{C_{1:m-1}}$ are conditionally independent given $X_u$ whereas the shortest paths from $u$ to nodes in $C_m \setminus u$ just consist of single edges, avoiding $C_{1:m-1}$ altogether. So we only consider the case $s \ne u$ henceforth. In that case, paths from $u$ to nodes in $C_m \setminus s$ pass through $s$ and possibly other nodes in $C_{1:m-1}$.
	
	The induction hypothesis is that as $t\rightarrow\infty$, we have
	\begin{equation} \label{eqn:ind_assum}
	(X_{C_{1:m-1}}/t)\mid X_u=t\dto A_{u,C_{1:m-1}},
	\end{equation}
	or also that for every continuous and bounded function $h: \reals_+^{C_{1:m-1}} \to \reals$, we have
	\begin{equation} \label{eqn:ind_assum1}
		\lim_{t \to \infty}
		\E\left[h(X_{C_{1:m-1}}/t)\mid X_u=t\right] =
		\E[h(A_{u,C_{1:m-1}})].
	\end{equation} 
	To prove the convergence in~\eqref{eq:limitXA} we start with the following inequality: for $\delta>0$,
	\begin{align}
	\lefteqn{
	\Bigl| \E[f(X_{V\setminus u}/t)\mid X_u=t]  -\E[f(A_{u,V\setminus u})]\Bigr|} 
	\nonumber\\&\leq
	\Bigl|\E\left[f(X_{V\setminus u}/t) \ind(X_{s}/t\geq\delta)\mid X_u=t\right]
	-
	\E\left[f(A_{u,V\setminus u}) \ind(A_{us}\geq\delta)\right]\Bigr|
	\label{eq:larger}\\&\null +
	\Bigl|\E\left[f(X_{V\setminus u}/t) \ind(X_{s}/t<\delta)\mid X_u=t\right]
	-
	\E\left[f(A_{u,V\setminus u}) \ind(A_{us}<\delta)\right]\Bigr|. \label{eq:smaller}
	\end{align}
	We let $\delta > 0$ be a continuity point of $A_{us}$. Later on, we will take $\delta$ arbitrarily close to zero, which we can do, since the number of atoms of $A_{us}$ is at most countable.
	
	\paragraph{\mdseries\itshape Analysis of~\eqref{eq:larger}.}
	We first deal with~\eqref{eq:larger}. The first expectation is equal to
	\[
	\int_{[0,\infty)^{V\setminus u}}
	f(x/t) \,
	\ind(x_s/t\geq\delta) \,
	\P(X_{V\setminus u}\in \mathrm{d}x \mid X_u=t)\, .
	\]    
	Because of the global Markov property, $X_{C_m\setminus s}$ is conditionally independent of the variables in the set $C_{1:m-1}$ given $X_s$. As a consequence, the conditional distribution of $X_{C_m\setminus s}$ given $X_{C_{1:m-1}}$ is the same as the one of $X_{C_m \setminus s}$ given $X_s$. Hence we can write the integral as 
	\[
	\int_{[0,\infty)^{C_{1:m-1}}}
	\E\left[f\left(x/t, X_{C_m\setminus s}/t\right) \mid X_{s}=x_{s}\right]
	\ind(x_s/t \ge \delta)
	\P(X_{C_{1:m-1}}\in \mathrm{d}x\mid X_u=t).
	\]
	After the change of variables $x/t=y$, the integral becomes
	\begin{equation}
	 \label{eq:int_y}
	\int_{[0,\infty)^{C_{1:m-1}}}
	\E\left[f\left(y, X_{C_m\setminus s}/t\right)
	\, \Big| \,
	X_{s}=ty_{s}\right]
	\ind(y_{s}\geq\delta)
	\P\left(X_{C_{1:m-1}}/t \in \mathrm{d}y \mid X_u=t\right).
	\end{equation}

	Define the functions $g_t$ and $g$ on $[0, \infty)^{C_{1:m-1}}$ by
	\begin{align*}
	g_t(y)
	&:=
	\E\left[f\left(y, X_{C_m\setminus s}/t\right)
	\, \Big| \,
	X_{s}=ty_{s}\right]
	\ind(y_{s}\geq\delta), \\
	g(y)
	&:=
	\E\left[f\left(y, y_sZ_{s,C_m\setminus s}\right)\right]
	\ind(y_{s}\geq\delta).
	\end{align*}
	Consider points $y(t)$ and $y$ in $[0, \infty)^{C_{1:m-1}}$ such that $\lim_{t \to \infty} y(t) = y$ and such that $y_s \ne \delta$. We need to show that 
	\begin{equation}
	\label{eq:gt2g}
		\lim_{t \to \infty} g_t(y(t)) = g(y). 
	\end{equation}
	If $y_s < \delta$, this is clear since $y_s(t) < \delta$ for all large $t$ and hence the indicators will be zero. So suppose $y_s > \delta$ and thus also $y_s(t) > \delta$ for all large $t$, meaning that both indicators are (eventually) equal to one. By Assumption~\ref{ass:nu}, we have
	\[
		X_{C_m \setminus s} / t \mid X_s = ty_s(t)
		\dto
		y_s Z_{s,C_m \setminus s}, \qquad t \to \infty.
	\]
	Since $f$ is continuous, also
	\[
		f\left( y(t), X_{C_{m} \setminus s} / t \right) 
		\, \Big| \, 
		X_s = ty_s(t) 
		\dto f\left(y, y_s Z_{s, C_m \setminus s}\right),
		\qquad t \to \infty.
	\]
	As the range of $f$ is contained in $[0, 1]$, the bounded convergence theorem implies that we can take expectations in the previous equation and conclude~\eqref{eq:gt2g}.

	By the induction hypothesis~\eqref{eqn:ind_assum} and Theorem~18.11 in \citet{vaart1998asymptotic}, the continuous convergence in~\eqref{eq:gt2g} implies
	\begin{equation*}
	g_t\left(\frac{X_{C_{1:m-1}}}{t}\right) \, \Big| \, X_u=t
	\dto g(A_{C_{1:m-1}}),
	\qquad t \to \infty;
	\end{equation*}
	note that by the choice of $\delta$, the discontinuity set of $g$ receives zero probability in the limit.
	As $g_t$ and $g$ are bounded (since $f$ is bounded), we can take expectations and find
		\begin{equation} \label{eqn:egt_eg}
		\lim_{t \to \infty}
		\E\left[g_t\left(\frac{X_{C_{1:m-1}}}{t}\right)
		\, \Big| \, X_u=t \right]
		=
		\E[g(A_{C_{1:m-1}})].
		\end{equation}
	The expectation on the left-hand side of~\eqref{eqn:egt_eg} is the integral in~\eqref{eq:int_y} while the right-hand side of~\eqref{eqn:egt_eg} is equal to
		\[
		\E[f(A_{u,C_{1:m-1}}, A_{us}Z_{s,C_m\setminus s})
		\ind(A_{us}\geq \delta)]
		=
		\E[f(A_{u,V\setminus u}) \, \ind(A_{us}\geq \delta)].
		\]
		Thus we have shown that~\eqref{eq:larger} converges to $0$ as $t\rightarrow\infty$, for any continuity point $\delta$ of $A_{us}$.
	
	\paragraph{\mdseries\itshape Analysis of~\eqref{eq:smaller}.}
	As $f$ is a function with range $[0, 1]$ we have
	\[
	0\leq \E[f(X_{V\setminus u}/t)\ind( X_{s}/t<\delta)\mid X_u=t]
	\leq
	\P[X_{s}/t<\delta\mid X_u=t]
	\]
	as well as
	\[
	0\leq \E[f(A_{u,V\setminus u})\ind(A_{us}<\delta)]
	\leq 
	\P[A_{us}<\delta] .
	\]
	By the triangle inequality and the two inequalities above,~\eqref{eq:smaller} is bounded from above by
	\begin{equation} \label{eq:pxpa}
	\P[X_{s}/t<\delta\mid X_u=t]
	+
	\P[A_{us}<\delta] .
	\end{equation}
	By the induction hypothesis 
	\[
	\lim_{t\rightarrow\infty}\P[X_{s}/t<\delta\mid X_u=t]
	=
	\P[A_{us}<\delta],
	\]
	and~\eqref{eq:pxpa} converges to $2 \P[A_{us}<\delta]$, which goes to 0 as $\delta\downarrow 0$ in case $\P(A_{us}=0)=0$.
	
	Suppose $\P(A_{us}=0)>0$. In this step we will need Assumption~\ref{ass:zero}. By the induction hypothesis, we have $A_{us}=\prod_{(a,b)\in \pth{u}{s}}Z_{ab}$ and the variables $Z_{ab}$ are independent. Hence 
	\begin{align*}
	  \P(A_{us}=0)=\P\Big(\min_{(a,b)\in\pth{u}{s}}Z_{ab}=0\Big)
	=
	1-\prod_{(a,b)\in \pth{u}{s}}\P(Z_{ab}>0).  
	\end{align*}
	If for any $(a,b)\in \pth{u}{s}$ we have $\P(Z_{ab}=0)>0$ then $P(Z_{ab}>0)<1$ and hence $\P(A_{us}=0)>0$. Therefore the assumption applies when the marginal distribution $\nu_{C,a}^b(\{0\})$ is positive. 

	Then by adding and subtracting terms and using the triangle inequality, we have the following upper bound for the term in~\eqref{eq:smaller}:
	\begin{align}
	&\left|\E\left[f\left(\frac{X_{C_{1:m-1}}}{t},\frac{X_{C_m\setminus s}}{t} \right)
	\ind\left(\frac{X_{s}}{t}<\delta\right)\Big| X_u=t\right]
	-\E\left[f\left(\frac{X_{C_{1:m-1}}}{t},0\right)\ind\left(\frac{X_{s}}{t}<\delta\right)\Big| X_u=t\right] \right|
	\label{eq:xx0}
	\\&+
	\left|\E\left[f(A_{u,C_{1:m-1}},A_{u,C_m\setminus s} )
	\ind(A_{u,s}<\delta)\right]
	-
	\E\left[f(A_{u,C_{1:m-1}},0)
	\ind(A_{u,s}<\delta)\right]\right|
	\label{eq:aa0}
	\\&+
	\left|\E\left[f\left(\frac{X_{C_{1:m-1}}}{t},0\right)
	\ind\left(\frac{X_{s}}{t}<\delta\right)\Big| X_u=t\right]
	-
	\E\left[f(A_{u,C_{1:m-1}},0)
	\ind(A_{u,s}<\delta)\right]\right|.
	\label{eq:x0a0}
	\end{align}
	We treat each of the three terms in turn.
	
		
Equation~\eqref{eq:x0a0} converges to $0$ by the induction hypothesis; note again that the set of discontinuities of the integrand receives zero probability in the limit. 
	
	Next we look at expression~\eqref{eq:xx0}. From the assumptions of $f$, namely that it ranges in $[0,1]$ and that $|f(x)-f(y)|\leq L \|x-y\|_1$ for some constant $L>0$, where $\|z\|_1 = \sum_j |z_j|$ for a Euclidean vector $z$, the term in~\eqref{eq:xx0} is bounded by 
	\begin{multline*}
	    \E\left[
	\left|f\left(\frac{X_{C_{1:m-1}}}{t},\frac{X_{C_m\setminus s}}{t} \right)
	-
	f\left(\frac{X_{C_{1:m-1}}}{t},0\right)\right|
	\ind\left(\frac{X_{s}}{t}<\delta\right)
	\, \Bigg| \, X_u=t\right]
	\\ \leq
	\E\left[\ind\left(X_s/t<\delta\right)
	\min\left(1,L \|X_{C_{m \setminus s}}/t\|_1 \right)
	\, \mid \, X_u=t\right].
	\end{multline*}
	We need to show that the upper bound converges to $0$ as $t\rightarrow\infty$. 
	Because the variables in $C_m\setminus s$ are independent of $X_u$ conditionally on $X_s$, the previous integral is equal to
	\begin{equation}
	\label{eq:int_L}
	\int_{[0,\delta]} 
	\E\left[
	\min\left(1,L\|X_{C_m \setminus s}/t\|_1\right)
	\, \mid \,
	X_{s} / t = x_{s} 
	\right]
	\P\left(X_{s}/{t}\in \mathrm{d}x_{s} \mid X_u=t \right).	
	\end{equation}
	For $\eta >0$, the inner expectation is equal to 
	\begin{align}
	&\E\left[\min\left(1,L\|X_{C_m \setminus s}/t\|_1\right) \,
	\ind \left\{ \forall v \in C_m \setminus s : X_v/t \le \eta \right\}
	\mid X_{s}/t = x_{s} \right]
	\label{eq:smaller_eta}
	\\& +
	\E\left[\min\left(1,L\|X_{C_m \setminus s}/t\|_1 \right) \,
	\ind \left\{ \exists v \in C_m\setminus s : X_v / t > \eta \right\}
	\mid X_{s} / t = x_{s}\right].
	\label{eq:larger_eta} 
	\end{align}
	The integrand in~\eqref{eq:smaller_eta} is either zero because of the indicator function or, if the indicator is one, it is bounded by $L \left|C_m \setminus s\right| \eta$.
	The expression in~\eqref{eq:larger_eta} is clearly smaller than or equal to
	$
	\P\left(
		\exists v\in C_m\setminus s : X_v / t > \eta \mid X_s/t = x_s
	\right)$.
	Going back to the integral in~\eqref{eq:int_L} we can thus bound it by
	\begin{align} \label{eq:delta_f}
	&\int_{[0,\delta]}
	\left[
		L \left|C_m\setminus s\right|\eta
		+
		\P\left(
			\exists v \in C_m \setminus s : X_v/t > \eta \mid X_s/t = x_s
		\right)
	\right]
	\P\left( X_s/t \in \mathrm{d}x_s \mid X_u=t\right).
	\end{align}
	Consider the supremum of the probability in the integrand over the values $x_{s}\in [0,\delta]$ to bound the integral further.
	Hence~\eqref{eq:delta_f} is smaller than or equal to 
	\[
	L \left| C_m\setminus s \right| \eta
	+
	\sup_{x_{s}\in [0,\delta]}
	\P \left( \exists v \in C_m \setminus s : X_v/t > \eta \mid X_s/t = x_s \right).
	\]
	Using Assumption~\ref{ass:zero} and the fact that $\eta$ can be chosen arbitrarily small we conclude that~\eqref{eq:xx0} converges to $0$ as $t\rightarrow\infty$.
	
	Finally we look at the term in~\eqref{eq:aa0}. As $f$ has range contained in $[0, 1]$ and is Lipschitz continuous, the expression in~\eqref{eq:aa0} is smaller than or equal to
	\begin{align} \label{eq:ALA}
	&    \E\Big[
	\ind(A_{us}<\delta)
	\min\Big(1, \, {L\textstyle\sum_{v\in C_m\setminus s}A_{uv}}\Big)\Big].
	\end{align}
	From $(A_{uv}, v\in C_m\setminus s)=A_{u,C_m\setminus s}=A_{us}Z_{s,C_m\setminus s}$ we can write~\eqref{eq:ALA} as
	\begin{align*}
	&    \E\Big[
	\ind(A_{us}<\delta)
	\min\Big(1,\textstyle{LA_{us}\sum_{v\in C_m\setminus s}Z_{sv}}\Big)\Big].
	\end{align*}
	The random variable inside the expectation is bounded by $1$ for any value of $\delta>0$ and it converges to $0$ as $\delta\downarrow 0$. By the bounded convergence theorem, the expectation in~\eqref{eq:ALA} converges to $0$ as $\delta \downarrow 0$.    
\qed

\subsection{Proof of Proposition~\ref{prop:nustdf}}

	The quantile function of the unit-Fréchet distribution is $u \mapsto -1/\ln(u)$ for $0 < u < 1$. In view of Sklar's theorem and the identity~\eqref{eqn:stdf}, the copula, $K$, of $G$ is
	\[
	K(u) 
	= G(-1/\ln u_1, \ldots, -1/\ln u_d) 
	= \exp \bigl( - \ell(- \ln u_1,\ldots,-\ln u_d) \bigr),
	\qquad u \in (0, 1)^d.
	\]
	It follows that the partial derivative $\dot{K}_1$ of $K$ with respect to its first argument exists, is continuous on $(0, 1)^2$ and is given by
	\[
	\dot{K}_1(u)
	= \frac{K(u)}{u_1} \, \dot{\ell}_1(-\ln u_1,\ldots,-\ln u_d),
	\]
	for $u \in (0, 1)^d$. The stdf is homogeneous: for $t > 0$ and $x \in (0, \infty)^d$, we have
	\[
	\ell(tx_1,\ldots,tx_d) = t \, \ell(x_1,\ldots,x_d).
	\]
	Taking the partial derivative with respect to $x_1$ on both sides and simplifying yields the identity
	\[
	\dot{\ell}_1(tx_1,\ldots,tx_d) = \dot{\ell}_1(x_1,\ldots,x_d).
	\]
	Let $F(x) = \exp(-1/x)$, for $x > 0$, denote the unit-Fréchet cumulative distribution function. Note that $-\ln F(x) = 1/x$ for $x > 0$. 
	For $t > 0$ and $x = (x_2,\ldots,x_d)\in (0, \infty)^{d-1}$, we find
	\begin{align*}
		\P(\forall j \ge 2, X_j \le t x_j \mid X_1 = t)
		&= \dot{K}_1 \bigl( F(t),F(tx_2),\ldots,F(tx_d) \bigr) \\
		&= \frac{K \bigl( F(t),F(tx_2),\ldots,F(tx_d) \bigr)}{F(t)} \,
		\dot{\ell}_1 (1/t, 1/(tx_2), \ldots, 1/(tx_d)) \\
		&= \frac{K \bigl( F(t),F(tx_2),\ldots,F(tx_d) \bigr)}{F(t)} \,
		\dot{\ell}_1 (1, 1/x_2, \ldots, 1/x_d).
	\end{align*}
	As $t \to \infty$, the first factor on the right-hand side tends to one, whence
	\[
	\lim_{t \to \infty}
	\P(\forall j \ge 2, X_j \le t x_j \mid X_1 = t)
	= \dot{\ell}_1 (1, 1/x_2, \ldots, 1/x_d).
	\]
	
	To show that the right-hand side of the previous equation is indeed the cumulative distribution function of a $(d-1)$-variate probability measure on Euclidean space, it is sufficient to show that, for every $j \in \{2, \ldots, d\}$, the family of conditional distributions $(X_j/t \mid X_1 = t)$ as $t$ ranges over $[t_0, \infty)$ for some large $t_0 > 0$ is uniformly tight.
	Indeed, the family of joint conditional distributions $\left((X_2,\ldots,X_d)/t \mid X_1 = t\right)$ for $t \in [t_0, \infty)$ is then uniformly tight as well, and by Prohorov's theorem \citep[Theorem~2.4]{vaart1998asymptotic}, we can find a sequence $t_n \to \infty$ such that the joint conditional distributions $(X/t_n \mid X_1 = t_n)$ converge weakly as $n \to \infty$, the limiting cumulative distribution function then necessarily being equal to the one stated above.
	It suffices to consider the case $d = j = 2$. By the first part of the proof above,
	\[
	\lim_{t \to \infty} \P(X_2/t > x_2 \mid X_1 = t) 
	= 1 - \dot{\ell}_1(1, 1/x_2).
	\]
	Since $\ell : [0, \infty)^2 \to [0, \infty)$ is convex, the functions $y_1 \mapsto \ell(y_1, y_2)$ depend continuously on the parameter $y_2 \ge 0$. Since they are also convex, Attouch's theorem \citep[Theorem~12.35]{rockafellar1998variational} implies that their derivatives depend continuously in $y_2$ as well, at least in points $y_1$ where $y_1 \mapsto \ell(y_1, y_2)$ is continuously differentiable. But since $\ell(y_1, 0) = y_1$, we find that $\dot{\ell}_1(1, 1/x_2) \to \dot{\ell}_1(1, 0) = 1$ as $x_2 \to \infty$. For any $\epsilon > 0$, we can thus find $x_2(\epsilon) > 0$ such that $1 - \dot{\ell}_1(1, 1/x_2(\epsilon)) < \epsilon / 2$ and then we can find $t(\epsilon) > 0$ such that $\P(X_2 / t > x_2(\epsilon) \mid X_1 = t) < \epsilon/2 + 1 - \dot{\ell}_1(1,1/x_2(\epsilon)) < \epsilon$ for all $t > t(\epsilon)$. The uniform tightness follows.
%
%
\qed

\section{Proofs for Section~\ref{sec:hr}}
\label{app:hr}

\subsection{Proof of Proposition~\ref{prop:log-excess}}

		By 
		Assumption~\ref{ass:HRMf}, the random vector $X$ satisfies Assumption~\ref{ass:nu} and it is Markov with respect to the graph $\G$.
		Assumption~\ref{ass:zero} is void (i.e., there is nothing to check), since, for each edge $(i, j) \in E$, the limiting distribution of $X_j/t \mid X_i = t$ as $t \to \infty$ is log-normal by \citet[Example~4]{segers2020one} and Example~\ref{exampleHR} here and therefore does not have an atom at zero.
		We can thus apply Theorem~\ref{prop_main} to conclude that $(X_v/t, v \in V \setminus u \mid X_u = t)$ converges weakly as $t \to \infty$.
		By the continuous mapping theorem, the same then holds true for $(\ln(X_v/t), v \in V \setminus u \mid X_u = t)$.
		It remains to calculate the limit distribution. 
		
		
\paragraph{\mdseries\itshape Calculating the limit in Theorem~\ref{prop_main}.}
	By example~\ref{exampleHR} we have, as $t \to \infty$,  
		\begin{equation} \label{eqn:log_conv}
			\bigl( \ln X_v - \ln X_s, \, v \in C \setminus s \mid X_s > t \bigr)
			\dto
			\mathcal{N}_{|C\setminus s|} \bigl(
			\mu_{C,s}(\Delta_C), \Psi_{C,s}(\Delta_C)
			\bigr),
		\end{equation}
		where the mean vector is
		\begin{align} \label{eqn:mulnX}
			\bigl(\mu_{C,s}(\Delta)\bigr)_v  = -2\delta_{sv}^2, \qquad v \in C \setminus s,
		\end{align}		
		and the covariance matrix $\Psi_{C,s}(\Delta)$ is as in~\eqref{eq:Psi_mat}.
		It follows that if the random vector $Z_{s,C \setminus s}$ has law $\nu_{C,s}$, then the distribution of $(\ln Z_{sv}, v \in C \setminus s)$ is equal to the limit in~\eqref{eqn:log_conv}. In particular, $\nu_{C,s}$ is multivariate log-normal.
	
	For fixed $u \in V$, we will identify the limit $A_{u, V \setminus u}$ in Theorem~\ref{prop_main}. 
	Let $Z = (Z_{s,C \setminus s}, C \in \mathcal{C})$ with $Z_{s,C} = (Z_{sv}, v \in C \setminus s)$ be the random vector constructed in Definition~\ref{def:increments} by concatenating independent log-normal random vectors with distributions $\nu_{C, s}$.
	In this concatenation, recall that $s \in C$ and that either $s$ is equal to $u$ or $s$ separates $u$ and $C \setminus s$.
	We can write $Z = (Z_e, e \in E_u)$ where the $E_u$ is the set of edges $e \in E$ that point away from $u$: for $e = (s, v) \in E_u$, either $s$ is equal to $u$ or $s$ separates $u$ and $v$.
	By construction, the distribution of $Z$ is multivariate log-normal too. By~\eqref{eqn:mulnX}, we have $\E[\ln Z_e] = -2\delta_e^2$ where $e = (s, v) \in E_u$. The covariance matrix of $(\ln Z_e, e \in E_u)$ has a block structure: for edges $e, f \in E_u$, the variables $\ln Z_e$ and $\ln Z_f$ are uncorrelated (and thus independent) if $e$ and $f$ belong to different cliques, while if they belong to the same clique, i.e., if $e = (s, i)$ and $f = (s, j)$ with $i,j,s \in C$ for some $C \in \mathcal{C}$, then, by~\eqref{eq:Psi_mat}, we have
	\begin{equation}
	\label{eq:covZeZf}
		\cov \left( \ln Z_e, \ln Z_f \right)
		= 2(\delta_{si}^2 + \delta_{sj}^2 - \delta_{ij}^2)
	\end{equation}
	
	By Theorem~\ref{prop_main}, we can express the limit $A_{u, V \setminus u}$ of $(X_v / t, v \in V \setminus u \mid X_u = t)$ as $t \to \infty$ in terms of $Z$: we have
	\begin{equation} \label{eqn:lAsumlZ}
		\ln A_{uv}
		= \ln \left(\prod_{e\in \pth{u}{v}}Z_e\right)
		= \sum_{e\in \pth{u}{v}}\ln Z_e,
		\qquad v \in V \setminus u.
	\end{equation} 
	The distribution of $(\ln A_{uv}, v \in V \setminus u)$ is thus multivariate Gaussian, being the one of a linear transformation of the multivariate Gaussian random vector $(\ln Z_e, e \in E_u)$.
	The expectation of $\ln A_{uv}$ is readily obtained from~\eqref{eqn:lAsumlZ}:
	\[
		\E[\ln A_{uv}] 
		= \sum_{e \in \pth{u}{v}} \E[\ln Z_e] 
		= \sum_{e \in \pth{u}{v}} (-2\delta_e^2) 
		= - 2p_{uv}, \qquad v \in V \setminus u,
	\] 
	which coincides with the element $v$ of the vector $\mu_u(\Delta)$ in~\eqref{eqn:muVu}.
	It remains to show that the covariance matrix of $(\ln A_{uv}, v \in V \setminus u)$ is $\Sigma_u(\Delta)$ in~\eqref{eqn:sigmaVu}.
	
	\paragraph{\mdseries\itshape Calculating $\Sigma_u(\Delta)$.}
	Let $i, j \in V \setminus u$. By~\eqref{eqn:lAsumlZ} and the bilinearity of the covariance operator, we have
	\begin{align*}
		\cov \left( \ln A_{ui}, \ln A_{uj} \right)
		= \sum_{e \in \pth{u}{i}} \sum_{f \in \pth{u}{j}} 
		\cov \left( \ln Z_e, \ln Z_f \right).
	\end{align*}
	Each of the paths $\pth{u}{i}$ and $\pth{u}{j}$ has at most a single edge in a given clique $C \in \mathcal{C}$; otherwise, they would not be the shortest paths from $u$ to $i$ and $j$, respectively.
	Let the node $a \in V$ be such that $\pth{u}{i} \cap \pth{u}{j} = \pth{u}{a}$. It could be that $a = u$, in which case the intersection is empty. Now we need to consider three cases.
	\begin{enumerate}[1.]
	\item 
	If $a = i$, i.e., if $i$ lies on the path from $u$ to $j$, then the random variables $\ln Z_f$ for $f \in \pth{i}{j}$ are uncorrelated with the variables $\ln Z_e$ for $e \in \pth{u}{i}$. By~\eqref{eq:covZeZf}, the covariance becomes
	\begin{align*}
		\cov \left( \ln A_{ui}, \ln A_{uj} \right)
		&= \sum_{e \in \pth{u}{i}} \var \left(\ln Z_e\right) \\
		&= \sum_{e \in \pth{u}{i}} 4 \delta_e^2 
		= 4 p_{ui} = 2 \left(p_{ui} + p_{uj} - p_{ij}\right),
	\end{align*}
	since $p_{ui} + p_{ij} = p_{uj}$, the path from $u$ to $j$ passing by $i$. This case includes the one where $i = j$, since then $\pth{i}{j}$ is empty and thus $p_{ij} = 0$.
	
	\item
	If $a = j$, the argument is the same as in the previous case.
	
	\item
	Suppose $a$ is different from both $i$ and $j$. Let $e_a$ and $f_a$ be the first edges of the paths $\pth{a}{i}$ and $\pth{a}{j}$, respectively. These two edges may or may not belong to the same clique. All other edges on $\pth{a}{i}$ and $\pth{a}{j}$, however, must belong to different cliques. It follows that
	\begin{align*}
	\cov \left( \ln A_{ui}, \ln A_{uj} \right)
	&= \sum_{e \in \pth{u}{a}} \var \left( \ln Z_e \right)
	+ \cov \left( \ln Z_{e_a}, \ln Z_{f_a} \right) \\
	&= 4 p_{ua} + \cov \left( \ln Z_{e_a}, \ln Z_{f_a} \right).
	\end{align*}
	Now we need to distinguish between two further sub-cases.
	\begin{enumerate}[({3}.a)]
	\item Suppose $e_a$ and $f_a$ do not belong to the same clique. Then the covariance between $\ln Z_{e_a}$ and $\ln Z_{f_a}$ is zero, so that
	\begin{align*}
		\cov \left( \ln A_{ui}, \ln A_{uj} \right)
		= 4 p_{ua} 
		&= 2 \left( 
			\left(p_{ui} - p_{ai}\right) + \left(p_{uj} - p_{aj}\right) 
		\right) \\
		&= 2 \left( p_{ui} + p_{uj} - \left( p_{ai} + p_{aj} \right) \right) \\
		&= 2 \left( p_{ui} + p_{uj} - p_{ij} \right),
	\end{align*}
	since the shortest path between $i$ and $j$ passes through $a$.
	\item Suppose $e_a$ and $f_a$ belong to the same clique; see Figure~\ref{fig:proof:log-excess}. Writing $e_a = (a, k)$ and $f_a = (a, l)$, we find, in view of~\eqref{eq:covZeZf},
	\begin{align*}
		\cov \left( \ln A_{ui}, \ln A_{uj} \right)
		&= 4 p_{ua} + 2 \left( \delta_{ak}^2 + \delta_{al}^2 - \delta_{kl}^2\right) \\
		&= 2 \left( \left(p_{ua} + \delta_{ak}^2\right) + \left(p_{ua} + \delta_{al}^2\right) - \delta_{kl}^2 \right) \\
		&= 2 \left( p_{uk} + p_{ul} - \delta_{kl}^2 \right) \\
		&= 2 \left( 
			\left(p_{ui} - p_{ki}\right) + 
			\left(p_{uj} - p_{lj}\right) - 
			\delta_{kl}^2 \right) \\
		&= 2 \left( p_{ui} + p_{uj} - \left( p_{ki} + p_{lj} + \delta_{kl}^2 \right) \right) \\
		&= 2 \left( p_{ui} + p_{uj} - p_{ij} \right),
	\end{align*}
	since the shortest path between $i$ and $j$ passes by $k$ and $l$.
	\end{enumerate}
	\end{enumerate}
\begin{figure}
\begin{center}
	\begin{tikzpicture}
		\node[hollow node] (a) at (0,0)  {$a$};
		\node[hollow node] (k) at (1.5,1)  {$k$};
		\node[hollow node] (l) at (1.5,-1)  {$l$};
		\node[] (e1) at (3,1)  {$\cdots$};
		\node[] (e2) at (3,-1)  {$\cdots$};
		\node[] (e3) at (-1.5,0)  {$\cdots$};
		\node[] (i) at (4.5,1)  {$i$};
		\node[] (j) at (4.5,-1)  {$j$};
		\node[] (u) at (-3,0)  {$u$};
		\node[] (c) at (0.15,1.6)  {$C$};
		\path (a) edge (k);
		\path (a) -- (k) node [midway,auto=left] {$e_a$};
		\path (a) edge (l);
		\path (a) -- (l) node [midway,auto=right] {$f_a$};
		\path (k) edge (l);
		\path (k) edge (e1);
		\path (l) edge (e2);
		\path (a) edge (e3);
		\path (e1) edge (i);
		\path (e2) edge (j);
		\path (e3) edge (u);
		\begin{scope}[very thick,dashed]
			\draw[] (0.9,0) ellipse (38pt and 46pt);
		\end{scope}
	\end{tikzpicture}
\end{center}
\caption{\label{fig:proof:log-excess} Calculation of $\cov \left( \ln A_{ui}, \ln A_{uj} \right)$ in the proof of Proposition~\ref{prop:log-excess}. The paths from $u$ to $i$ and from $u$ to $j$ have the path from $u$ to $a$ in common. On these two paths, the edges right after node~$a$ are $e_a = (a, k)$ and $f_a = (a, l)$, respectively. The picture considers the case (3.b) where the three nodes $a, k, l$ belong to the same clique, say $C$.}
\end{figure}
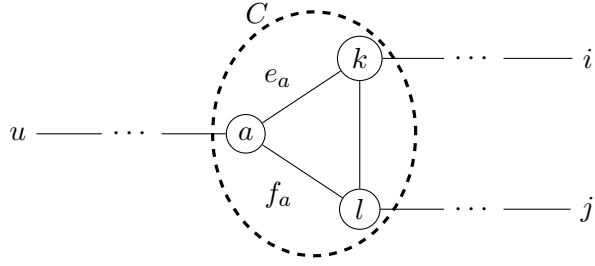
We conclude that the covariance matrix of $(\ln A_{uv}, v \in V \setminus u)$ is indeed $\Sigma_u(\Delta)$ in~\eqref{eqn:sigmaVu}.

\paragraph{\mdseries\itshape Positive definiteness of $\Sigma_u(\Delta)$.}

Being a covariance matrix, $\Sigma_u(\Delta)$ is positive semi-definite. We need to show it is invertible.
The linear transformation in~\eqref{eqn:lAsumlZ} can be inverted to give
\[
\ln Z_e = \ln A_{ub} - \ln A_{ua}
\]
for an edge $e = (a, b)$ in $E_u$; note that either $a = u$, in which case $A_{ua} = 1$ and thus $\ln A_{ua} = 0$, or $a$ lies on the path from $u$ to $b$.
Also, each edge $e \in E_u$ is uniquely identified by its endpoint $v$ in $V \setminus u$; let $e(v)$ be the unique edge in $E_u$ with endpoint $v$.
It follows that, as column vectors, the random vectors $(\ln Z_{e(v)}, v \in V \setminus u)$ and $(\ln A_{uv}, v \in V \setminus u)$ are related by
\[
	\left( \ln A_{uv}, v \in V \setminus u \right)
	= M_u \left( \ln Z_{e(v)}, v \in V \setminus u \right), 
\]
where $M_u$ is a $(|V|-1) \times (|V|-1)$ matrix indexed by $(v, w) \in (V \setminus u)^2$ whose inverse is
\[
	(M_u^{-1})_{vw} =
	\begin{cases}
		\phantom{-}1 & \text{if $w = v$,} \\
		-1 & \text{if $(w, v) \in E_u$,} \\
		\phantom{-}0 & \text{otherwise.}
	\end{cases}
\]
The covariance matrix of $(\ln A_{uv}, v \in V \setminus u)$ is thus
\[
	\Sigma_u(\Delta) = M_u \, \Sigma_u^Z(\Delta) \, M_u^\top
\]
where $\Sigma_u^Z(\Delta)$ is the (block-diagonal) covariance matrix of $(\ln Z_{e(v)}, v \in V \setminus u)$. 
The blocks in $\Sigma_u^Z$ are given by~\eqref{eq:covZeZf} and are positive definite and thus invertible by the assumption that each parameter matrix $\Delta_C$ is conditionally negative definite.
As a consequence, $\Sigma_u^Z(\Delta)$ is invertible too.
Writing $\Theta_u^Z(\Delta) = \bigl( \Sigma_u^Z(\Delta) \bigr)^{-1}$, we find that $\Sigma_u(\Delta)$ is invertible as well with inverse
\begin{equation}
\label{eqn:ThetauDelta}
	\Theta_u(\Delta) = 
	(M_u^{-1})^\top \, \Theta_u^Z(\Delta) \, M_u^{-1}.
\end{equation}

\paragraph{\mdseries\itshape $P(\Delta)$ is conditionally negative definite.}

Clearly, $P(\Delta)$ is symmetric and has zero diagonal. For any non-zero vector $a \in \reals^V$, we have, since $\Sigma_u(\Delta)$ is positive definite,
\begin{align*}
	0 &< a^\top \Sigma_u(\Delta) a \\
	&= 2 \sum_{i \in V} \sum_{j \in V} a_i \left( p_{ui} + p_{uj} - p_{ij} \right) a_j \\
	&= 2 \sum_{i \in V} a_i p_{ui} \sum_{j \in V} a_j + 2 \sum_{i \in V} a_i \sum_{j \in V} p_{uj} - 2 \sum_{i \in V} \sum_{j \in V} a_i p_{ij} u_j.
\end{align*}
If $\sum_{i \in V} a_i = 0$, the first two terms on the right-hand side vanish.
The last term is $-2 a^\top P(\Delta) a$.
We conclude that $P(\Delta)$ is conditionally negative definite, as required.
\qed

\subsection{Proof of Proposition~\ref{prop_HRattractor}}

Let $H_{P(\Delta)}$ be the $|V|$-variate max-stable Hüsler--Reiss distribution in~\eqref{eq:HRDelta} with parameter matrix $P(\Delta)$ in~\eqref{eq:pij}. From \eqref{eq:A2stdf} we have 
\[
-\ln H_{P(\Delta)}(x_v, v\in V)
= \ell(1/x_v, v\in V)
= \E\left[\max_{v \in V} x_v^{-1} A_{uv}\right].
\]
By the maximum--minimums identity, we have 
\[
	\E\left[\max_{v \in V} x_v^{-1} A_{uv}\right]
	=\sum_{i=1}^{|V|} (-1)^{i-1} \sum_{W \subseteq V: |W|=i} 
	\E\left[\min_{v \in V} x_v^{-1}A_{uv}\right]
\]

If $W$ is a singleton, $W = \{u\}$, then the expectation is simply $x_u^{-1}$, whereas if $W$ has more than one element, we write the expectation as the integral of the tail probability:
\begin{align*}
\E \left[ \min_{v \in W} x_v^{-1} A_{uv} \right]
&= \int_0^\infty 
\P \left[ \forall v \in W : x_v^{-1} A_{uv} > y \right]
\diff y \\
&= \int_0^{x_u^{-1}} 
\P \left[ \forall v \in W \setminus u : A_{uv} > x_v y \right] 
\diff y \\
&= \int_{\ln x_u}^{\infty} \P \left[
\forall v \in W \setminus u : \ln A_{uv} > \ln(x_v)-z 
\right] \, e^{-z} \, \diff z.
\end{align*}
If $W = \{u\}$, we interpret the probability inside the integral as equal to one, so that the integral formula is valid for any non-empty $W \subseteq V$.
Combining things, we find that
\begin{multline*}
-\ln H_{P(\Delta)}(x_v, v\in V)\\
= \sum_{i=1}^{|V|} (-1)^{i-1} 
\sum_{W \subseteq V: |W| = i}
\int_{\ln x_u}^{\infty}	\P \left[
\forall v \in W \setminus u : \ln A_{uv} > \ln (x_v) - z 
\right] e^{-z} \, \diff z.
\end{multline*}
Recall that the distribution of $(\ln A_{uv}, v \in V \setminus u)$ is multivariate normal with mean vector $\mu_u(\Delta)$ and covariance matrix $\Sigma_u(\Delta)$ in~\eqref{eqn:muVu} and~\eqref{eqn:sigmaVu}, respectively, hence the expression of $H_{P(\Delta)}$ in~\eqref{eq:HRDelta}.
\qed

\subsection{Proof of Proposition~\ref{prop:extr_gm}}

By Proposition~\ref{prop_HRattractor}, the Markov field $X$ in Assumption~\ref{ass:HRMf} satisfies~\eqref{eqn:FtG} with $F$ its joint cumulative distribution function and $G = H_{P(\Delta)}$.
It follows that~\eqref{eqn:pareto} holds too, yielding the weak convergence of high-threshold excesses to a Pareto random vector $Y$ with distribution given in~\eqref{eqn:pareto_lim}.
It remains to show that $Y$ is an extremal graphical model with respect to the given graph $\G$ in the sense of Definition~2 in \citet{engelke2020graphical}.  

Proposition~\ref{prop:log-excess} implies
\[
	\left( \ln X_v - \ln X_u, v \in V \setminus u \mid X_u > t \right)
	\dto \mathcal{N}_{|V \setminus u|} \bigl( \mu_u(\Delta), \Sigma_u(\Delta) \bigr)
\]
as $t \to \infty$. The latter is the distribution of $(\ln Y_v - \ln Y_u, v \in V \setminus u \mid Y_u > 1)$ for the multivariate Pareto random vector $Y$ in~\eqref{eqn:pareto} associated to the max-stable Hüsler--Reiss distribution with parameter matrix $P(\Delta)$.

To show that $Y$ is an extremal graphical model with respect to the given block graph $\G$, we apply the criterion in Proposition~3 in \citet{engelke2020graphical}.
Let $\Theta_u(\Delta) = (\Sigma_u(\Delta))^{-1}$ be the precision matrix of the covariance matrix $\Sigma_u(\Delta)$ in~\eqref{eqn:sigmaVu}; see~\eqref{eqn:ThetauDelta}.
For $i, j \in V$ such that $i$ and $j$ are not connected, i.e., $(i, j)$ is not an edge, we need to show that there is $u \in V \setminus \{i, j\}$ such that
\[
	\bigl(\Theta_u(\Delta)\bigr)_{ij} = 0.
\]
Indeed, according to the cited proposition, the latter identity implies that
\[
	Y_i \perp_{\mathrm{e}} Y_j \mid Y_{\setminus \{i, j\}},
\]
the relation $\perp_{\mathrm{e}}$ being defined in Definition~1 in \citet{engelke2020graphical}, and thus that $Y$ is an extremal graphical model with respect to $\G$.

For two distinct and non-connected nodes $i, j \in V$, let $u \in V \setminus \{i, j\}$.
We will show that $(\Theta_u(\Delta))_{ij} = 0$.
We have
\begin{align*}
	\bigl(\Theta_u(\Delta)\bigr)_{ij}
	&= \sum_{a \in V \setminus u} \sum_{b \in V \setminus u}
	\bigl((M_u^{-1})^\top\bigr)_{ia} \bigl(\Theta_u^Z(\Delta)\bigr)_{ab} \bigl(M_u^{-1}\bigr)_{bj} \\
	&= \sum_{a \in V \setminus u} \sum_{b \in V \setminus u}
	\bigl(M_u^{-1}\bigr)_{ai} \bigl(\Theta_u^Z(\Delta)\bigr)_{ab} \bigl(M_u^{-1}\bigr)_{bj}.
\end{align*}
Now, $\bigl(M_u^{-1}\bigr)_{ai}$ and $\bigl(M_u^{-1}\bigr)_{bj}$ are non-zero only if $a = i$ or $(i, a) \in E_u$ together with $b = j$ or $(j, b) \in E_u$. In neither case can $a$ and $b$ belong to the same clique, because otherwise we would have found a cycle connecting the nodes $u, i, a, b, j$.
But if $a$ and $b$ belong to different cliques, then so do the edges $e(a)$ and $e(b)$ in $E_u$ with endpoints $a$ and $b$, and thus  $\bigl(\Theta_u^Z(\Delta)\bigr)_{ab} = 0$, since $\Sigma_u^Z(\Delta)$ and thus $\Theta_u^Z(\Delta)$ are block-diagonal.
\qed

\subsection{Proof of Proposition~\ref{prop:identif_hr}}
  
\paragraph{\mdseries\itshape Necessity.}
Let $v \in \bar{U}$ have clique degree $\cd(v)$ at most two.
We show that the full path sum matrix $P(\Delta)$ is not uniquely determined by the restricted path sum matrix $P(\Delta)_U$ and the graph $\G$. There are two cases: $\cd(v) = 1$ and $\cd(v) = 2$.

Suppose first that $\cd(v) = 1$.
Then $v$ belongs only to a single clique, say $C \in \mathcal{C}$.
For any $i, j \in U$, the shortest path $\pth{i}{j}$ does not pass through $v$.
Hence the edge weights $\delta_{vw}^2$ for $w \in C \setminus v$ do not show up in any path sum $p_{ij}$ appearing in $P(\Delta)_U$.
It follows that these edge weights can be chosen freely (subject to specifying a valid Hüsler--Reiss parameter matrix on $C$) without affecting $P(\Delta)_U$.

Suppose next that $\cd(v) = 2$.
Without loss of generality, assume $U = V \setminus v$; this only enlarges the number of visible path sums with respect to the initial problem.
We will show that the path sum sub-matrix $(p_{ab}(\Delta))_{a, b \in V \setminus v}$ does not determine the complete path sum matrix $P(\Delta)$.

By assumption, $v$ is included in two different cliques.
Let the set of nodes from one of them, excluding $v$, be $I$ and let the set of nodes from the other one, excluding $v$, be $J$.
The sets $I$ and $J$ are non-empty and disjoint.
We will show that the edge parameters $\delta_{vi}^2$ for $i \in I$ and $\delta_{vj}^2$ for $j \in J$ are not uniquely determined by the path sums $p_{ab}$ for $a, b \in V \setminus \{v\}$.

\begin{itemize}
\item
On the one hand, if the path $\pth{a}{b}$ does not pass by $v$, then the path sum $p_{ab}$ does not contain any of the edge parameters $\delta_{vi}^2$ or $\delta_{vj}^2$ as a summand.
\item
On the other hand, if the path $\pth{a}{b}$ passes through $v$, then for some $i \in I$ and $j \in J$ determined by $a$ and $b$ the path sum $p_{ab}$ contains the sum $\delta_{vi}^2 + \delta_{vj}^2$ as a summand.
However, sums of the latter form do not change if we decrease each $\delta_{vi}^2$ ($i \in I$) by some small quantity, say $\eta > 0$, and simultaneously increase each $\delta_{vj}^2$ ($j \in J$) by the same quantity, yielding $(\delta_{vi}^2 - \eta) + (\delta_{vj}^2 + \eta) = \delta_{vi}^2 + \delta_{vj}^2$.
\end{itemize}

\paragraph{\mdseries\itshape Sufficiency.}
Let every node in $\bar{U}$ have clique degree at least three.
Let $a \in \bar{U}$ and let $\delta^2_{ab}$ be the parameter attached to the edge $(a,b)\in E$, with $b \in V \setminus a$.
We will show that we can solve $\delta^2_{ab}$ from the observable path sums $p_{ij}$ for $i, j \in U$.

By assumption there are at least three cliques that are connected to $a$, say $I$, $J$, and $Y$. Without loss of generality, assume $b\in I$.
If $b\in U$ set $\bar{\imath}:=b$, while if $b\in \bar{U}$ walk away from $b$ up to the first node $\bar{\imath}$ in $U$ and this along the unique shortest path between $b$ and $\bar{\imath}$; note that $(a,b) \in \pth{a}{\bar{\imath}}$.
Apply a similar procedure to the cliques $J$ and $Y$:
choose a node $j\in J \setminus a$ (respectively $y\in Y \setminus a$) and if $j\in U$ ($y\in U$) set $\bar{\jmath}:=j$ ($\bar{y}:=y$), while if $j\in \bar{U}$ (respectively $y\in \bar{U}$) take the first node $\bar{\jmath}$ ($\bar{y}$) such that $(a,j)\in \pth{a}{\bar{\jmath}}$ [$(a,y)\in \pth{a}{\bar{y}}$].
Because the nodes $\bar{\imath}$, $\bar{\jmath}$ and $\bar{y}$ belong to $U$, the path sums $p_{\bar{\imath}\bar{\jmath}}$, $p_{\bar{\imath}\bar{y}}$, and $p_{\bar{y}\bar{\jmath}}$ are given. By construction, node $a$ lies on the unique shortest paths between the nodes $\bar{\imath}$, $\bar{\jmath}$ and $\bar{y}$; see also \citet[Theorem~1(b)]{behtoei2010a}.
It follows that
\begin{align*}
	p_{\bar{\imath}\bar{\jmath}} &= p_{a\bar{\imath}}+p_{a\bar{\jmath}}, \\
	p_{\bar{\imath}\bar{y}}  &= p_{a\bar{\imath}}+p_{a\bar{y}}, \\
	p_{\bar{y}\bar{\jmath}}  &= p_{a\bar{\jmath}}+p_{a\bar{y}}.
\end{align*}
These are three equations in three unknowns, which can be solved to give, among others, $p_{a\bar{\imath}} = \frac{1}{2} (p_{\bar{\imath}\bar{y}} + p_{\bar{\imath}\bar{\jmath}} - p_{\bar{y}\bar{\jmath}})$.
Now we distinguish between two cases, $b \in U$ and $b \in \bar{U}$.

\begin{itemize}
\item
If $b \in U$ then $\bar{\imath} = b$ and we have written $\delta^2_{ab} = p_{a\bar{\imath}}$ in terms of the given path sums.

\item
If $b\in \bar{U}$ we repeat the same procedure as above but starting from node $b$.
We keep the node $\bar{\imath}$, but the nodes $\bar{\jmath}$ and $\bar{y}$ may be different from those when starting from $a$.
After having written $p_{b\bar{\imath}}$ in terms of observable path sums, we can compute $\delta_{ab}^2 = p_{a\bar{\imath}} - p_{b\bar{\imath}}$. \qed
\end{itemize}

\end{document}